\documentclass[a4paper,11pt]{article}
\pdfoutput=1

\usepackage[utf8]{inputenc}
\usepackage[english]{babel}
\usepackage[T1]{fontenc}
\usepackage{lmodern}
\usepackage{microtype}
\usepackage[margin=1in]{geometry}
\usepackage{amsmath,amssymb,amsthm,mathtools}
\usepackage{enumitem}
\usepackage[numbers,sort&compress]{natbib}
\usepackage{xurl}
\usepackage[hidelinks]{hyperref}
\hypersetup{
  pdftitle={Complexity Amplification from Compression in Quantum Random Access Optimization},
  pdfauthor={Stuart Hadfield}
}

\newtheorem{theorem}{Theorem}[section]
\newtheorem{definition}[theorem]{Definition}
\newtheorem{corollary}[theorem]{Corollary}
\newtheorem{proposition}[theorem]{Proposition}
\newtheorem{lemma}[theorem]{Lemma}
\theoremstyle{remark}
\newtheorem{remark}[theorem]{Remark}

\newcommand{\QRAO}{\mathrm{QRAO}}
\newcommand{\QMC}{\mathrm{QMC}}
\newcommand{\OPT}{\mathrm{OPT}}
\newcommand{\NP}{\mathsf{NP}}
\newcommand{\StoqMA}{\mathsf{StoqMA}}
\newcommand{\BQP}{\mathsf{BQP}}
\newcommand{\QMA}{\mathsf{QMA}}

\usepackage{graphicx}
\usepackage{tikz}
\usetikzlibrary{arrows.meta,positioning,calc}
\title{Complexity Amplification from Compression in Quantum Random Access Optimization}
\author{Stuart Hadfield\\
\small USRA Research Institute for Advanced Computer Science,\\
\small Mountain View, California 94043, USA\\
\small \texttt{shadfield@usra.edu}\\
\small ORCID: \href{https://orcid.org/0000-0002-4607-3921}{0000-0002-4607-3921}}

\date{}

\begin{document}
\maketitle

\begin{abstract}
\begingroup

Compressed quantum encodings aim to overcome hardware limitations towards tackling challenging problems at scale, with many classical variables mapped onto noncommuting observables of fewer qubits. 
Despite their appeal, the tradeoffs and limitations of quantum compression remain poorly
understood. 
Classically, relaxations such as the semidefinite program formulation of MaxCut trade solution quality for computational efficiency. 
By contrast, quantum relaxations based on compression can \emph{amplify} the worst-case complexity of the problem being solved.
We study quantum random access optimization (QRAO), a
special case of the Pauli correlation encoding (PCE) framework that assigns up to three binary
variables to the Pauli $X$, $Y$, and $Z$ observables of each qubit, with the packing choices determining the compressed Hamiltonian to be optimized.
We identify explicit QRAO optimal
energy promise problems complete for $\NP$, $\StoqMA$, and $\QMA$, with
inverse-polynomial promise gaps for the latter two.
Existing results
establish hardness for abstract classes of Hamiltonians, but do not show these Hamiltonians
to be generated by valid QRAO packings. 
Our problem reductions do so while preserving inverse-polynomial promise gaps, without requiring gadgets or ancilla qubits. 
For any fixed packing, we show that weighted
MaxCut instances compress, up to a known shift and rescaling, to 
arbitrary nonnegative-weight pairwise Pauli couplings allowed by the packing.
A coupling using a single Pauli type is called aligned. 
For QRAO, using one aligned axis gives an $\NP$-complete energy problem. Under a
fixed two-to-one packing, positive single-variable terms along a second axis give a $\StoqMA$-complete transverse-field problem. Using two or three positive 
aligned Pauli axes gives $\QMA$-complete problems on unrestricted interaction graphs, while their bipartite restrictions lie in both $\BQP$ and $\StoqMA$.

We show that this computational hardness survives compilation and
is practically relevant. 
For any degree-ordered greedy QRAO compiler with fixed-order tie breaking, we construct dense, connected, regular, non-bipartite MaxCut families that assign exactly $d\in\{2,3\}$ variables to every qubit.
On these families MaxCut is $\NP$-complete while the QRAO
energy problem is $\QMA$-complete. Notably, this result applies directly to the current QRAO compiler implementation in Qiskit Optimization 0.7.0, confirming our hardness results are not artifacts of 
contrived packing rules.

We also consider optimization over restricted quantum state spaces.
On connected three-axis instances, optimization
over encoded classical assignments, product states, or separable states is
$\NP$-complete, whereas unrestricted optimization is $\QMA$-complete. Constant-factor approximation of the unrestricted three-axis relaxation is $\NP$-hard. 
Altogether our results identify worst-case complexity barriers arising from quantum compression, while making no broad claims about typical cases, or the performance and trainability of algorithm pipelines that use it.

\endgroup
\end{abstract}

\section{Introduction}

Quantum optimization algorithms offer a new computational paradigm for
tackling challenging combinatorial optimization problems~\cite{AbbasEtAl2024}.
Even before accounting for the additional qubit overhead required for fault
tolerance, the limited qubit counts of current devices pose a fundamental
obstacle to scaling such methods toward problem sizes beyond the reach of
classical algorithms. This challenge has motivated the development of compressed
encodings that map many classical variables to quantum observables on a smaller
quantum register.

Quantum random access optimization (QRAO)~\cite{Fuller2024QRAO} trades qubit
count for noncommutativity. In the standard $d$-to-$1$ QRAO encodings,
$d\in\{1,2,3\}$ classical binary variables are assigned to different Pauli
observables of each qubit. As these observables do not commute, a quadratic
problem objective that would normally produce a diagonal Ising Hamiltonian under the
usual single-variable per qubit mapping~\cite{Lucas2014Ising,Hadfield2021Representation}
instead becomes in general a noncommuting quantum Hamiltonian
on fewer qubits.
The
resulting problem relaxation is optimized over quantum states and its 
measurements are decoded into a classical solution. Classically, convex
relaxations such as semidefinite programs are generally chosen to yield
efficiently computable bounds and approximate solutions.  
QRAO instead maps a classical optimization problem to a quantum Hamiltonian optimization, but with no guarantee the resulting problem will be any easier. 
On explicit worst-case families, we show that QRAO compression produces a $\QMA$-complete optimal-energy problem from an $\NP$-complete
classical input family.

Previously, Teramoto et al.~\cite{Teramoto2023Extensions} identified as future work the complexity of finding a
relaxed state whose energy exceeds the classical optimum under QRAO, and noted that exact
extremal-state computation is $\QMA$-hard for general local
Hamiltonians~\cite{KempeKitaevRegev2006}. This general
observation does not classify QRAO complexity. Compressed Hamiltonians generated by valid inputs and
packings occupy a constrained positive-coupling image whose complexity can
depend on the active Pauli axes, interaction topology, and admitted quantum state
domain. We determine these boundaries and prove $\QMA$-completeness for explicit
reachable families, including fully occupied packings produced by the stable
greedy compiler used in the Qiskit Optimization 0.7.0 software package.

We call the assignment of variables to physical
qubits and Pauli axes a \emph{packing}. Its \emph{grouping} records which
variables share each qubit, while the QRAO \emph{compiler} is the classical procedure that
derives the grouping and Pauli slot (type) assignments from the classical problem input. In the compiler class
studied here, the classical support graph (variables as vertices and nonzero quadratic terms as edges) is colored greedily in nonincreasing degree order. 
A compiler is \emph{stable} if its greedy tie breaking rule preserves the fixed input order. 
Here we consider primarily applications to quadratic unconstrained binary optimization (QUBO)  problems, for which MaxCut is a prototypical $\NP$-hard problem. 
Each weighted problem clause is directly mapped,
up to a known shift and rescaling, to a corresponding Pauli coupling in the
compressed Hamiltonian.

QRAO is one of several qubit-efficient encoding strategies whose choices of
state space, loss function, and readout map create different resource
tradeoffs~\cite{TanEtAl2021,SawayaEtAl2023}. At the representation level, QRAO
is a special case of the more general Pauli-correlation encoding (PCE)
framework introduced by Sciorilli et al.~\cite{SciorilliEtAl2025}. PCE extracts
classical variable assignments as the signs of expectation values of Pauli
strings, with higher-locality strings offering stronger polynomial
compression. 
QRAO represents possible classical assignments by product states of single-qubit quantum random access code (QRAC) codewords, where each codeword encodes a definite assignment to the binary
variables packed into that qubit, 
and recovers classical solutions using QRAC-based rounding, such as random magic measurements or sign rounding of the assigned Pauli observables.
The input problem, together with the compression and compiler parameters, produces a unique quantum objective Hamiltonian, which is then optimized over all states or a specified class with a suitable approach such as parameterized quantum circuits or closely related variational approaches~\cite{AbbasEtAl2024,McCleanEtAl2016}. QRAO then recovers classical
solutions through repeated QRAC measurements and rounding, whereas general PCE methods
are typically paired with nonlinear correlator losses and sign decoding. Our
theorems classify the QRAO energy problem, not general PCE objectives.

The exactly characterizable QRAO setting nevertheless gives a concrete answer
to a broader question:
\begin{quote}
\emph{Can quantum compression result in a harder problem than the one we were given?}
\end{quote}

For QRAO we answer this question in the affirmative, up to the usual standard
beliefs in computational complexity theory, such as $\NP$ being unequal to $\QMA$,
its quantum witness counterpart. 
Our complexity classification is based on the
choice of compressed encoding alone and precedes the choices of ansatz,
training procedure, or decoder. 
We emphasize that our results primarily concern the global energy optimization problem over different classes of quantum states. We do not analyze or restrict to a particular quantum circuit ansatz or training method. 
Indeed, for a Pauli Hamiltonian with polynomially bounded coefficients in \(\ell_1\) norm, the energy of an efficiently prepared state can be estimated with high probability to inverse-polynomial additive accuracy using polynomially many Pauli observable measurements~\cite{McCleanEtAl2016}.

Considerations such as problem locality do not resolve the above question either.
The Local Hamiltonian quantum energy estimation decision problem is
$\QMA$-complete in general, while even when restricted to two-qubit interactions
it exhibits polynomial-time, $\NP$-, $\StoqMA$-, and $\QMA$-complete regimes
~\cite{KitaevShenVyalyi2002,KempeKitaevRegev2006,
BravyiBessenTerhal2006,CubittMontanaro2016,PiddockMontanaro2017,
MarwahaSud2026}. Hence existing results do not immediately imply hardness
classifications for QRAO. Indeed, a QRAO compiler is not able to produce
arbitrary interactions, with the set of possible compressed Hamiltonians
highly constrained by its rules for assigning variables to Pauli operators.
Our results therefore lie at the intersection of Hamiltonian complexity and
QRAO reachability.

We show families of connected, regular, exactly compressed positive-weight
MaxCut instances for which the classical threshold decision problem is
$\NP$-complete, but the corresponding QRAO energy promise problem
is $\QMA$-complete to inverse-polynomial precision. In between these extremes we
show a $\StoqMA$-complete variant that arises in compressing quadratic
QUBO problems, resulting when the encoding
choice leads to transverse-field Ising model Hamiltonians. We refer to this
change in completeness classification as \emph{complexity amplification from
quantum compression}.

Here and throughout, the QRAO energy problem is the standard
quantum Hamiltonian promise decision problem. For Hamiltonians on $n$ qubits, given thresholds $\alpha<\beta$ separated by at least an
inverse polynomial in $n$, the task is to distinguish whether the maximum eigenvalue 
$\lambda_{\max}(H_{\QRAO})\leq\alpha$ from the case 
$\lambda_{\max}(H_{\QRAO})\geq\beta$ under the promise that one holds. For convenience we primarily consider energy (objective) maximization instead of minimization throughout. Our
classical complexity results use the usual MaxCut threshold decision problem.
For complexity
statements, all input parameters are assumed to be rational, polynomially bounded in
magnitude, and specified using polynomially many bits in the problem size.

The hardness results used in our proofs classify abstract Hamiltonian
families. A QRAO Hamiltonian instead results as the image of a
classical objective function under a globally consistent
packing. One must therefore characterize the reachable set of Hamiltonians, map known
hard problem instances into this set while preserving their promise gaps, and
show that a QRAO compiler actually produces the hard Hamiltonians when
compressing. Proposition~\ref{prop:positive-cone}, Proposition~\ref{prop:exact-lift}, and
Theorem~\ref{thm:connected} below supply these three missing steps.

The central compiler statement is the following. Its two hardness claims
refer to one explicitly defined structural class, although different source
subfamilies may also witness them.

\begin{theorem}[Main result, informal]
\label{thm:informal-main}
For $d\in\{2,3\}$ and every fixed stable degree-ordered greedy
compiler, there is a class of connected, regular, non-bipartite positive-weight MaxCut
instances that is exactly $d$-to-$1$ compressed by the
compiler and has the
following properties.
\begin{enumerate}[label=(\roman*),itemsep=1pt]
\item The uncompressed classical MaxCut threshold decision problem is $\NP$-complete. \item The unrestricted QRAO energy promise problem with an
inverse-polynomial gap is $\QMA$-complete on the same input problem class.
\end{enumerate}
The construction works for every fixed tie-breaking order and requires no ancilla
qubits beyond the~$n$ qubits of the compressed Hamiltonian.
\end{theorem}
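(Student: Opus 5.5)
The plan is to assemble the statement from the three ingredients named just before it: the positive-cone characterization (Proposition~\ref{prop:positive-cone}), the exact lift (Proposition~\ref{prop:exact-lift}), and the compiler-reachability result (Theorem~\ref{thm:connected}).

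First, I fix the QRAO map for a packing. A MaxCut edge $\{u,v\}$ with weight $w_{uv}>0$ contributes $\tfrac{w_{uv}}{2}(1-x_ux_v)$. Under the packing, $x_u$ becomes the Pauli $P_u\in\{X,Y,Z\}$ on qubit $q(u)$, rescaled by $\sqrt{d}$ (or $\sqrt3$). The edge therefore maps to $\tfrac{w_{uv}}{2}\bigl(I - c_d\,P_u^{(q(u))}P_v^{(q(v))}\bigr)$. Because $w_{uv}$ is an arbitrary nonnegative weight, a known affine shift and rescaling makes every pairwise coupling allowed by the packing reachable with any prescribed nonnegative coefficient. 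This is the content of Proposition~\ref{prop:positive-cone}.

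For the $\QMA$ part I reduce from a known $\QMA$-complete two-local family with couplings of one sign along two or three aligned axes. The natural choices are the Heisenberg/quantum Max Cut model $\sum w_{ij}(X_iX_j+Y_iY_j+Z_iZ_j)$ for $d=3$ (Piddock--Montanaro; Cubitt--Montanaro) and the XY model for $d=2$. Note that $-P_iP_j$ arises from a cut term, so the antiferromagnetic sign is exactly what MaxCut produces.

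Given such a Hamiltonian on $n$ qubits, I place $d$ variables $v_{i,P}$ on qubit $i$, one per axis. Each target coupling $w_{ij}P_iP_j$ becomes a MaxCut edge $(v_{i,P},v_{j,P})$. Proposition~\ref{prop:exact-lift} then says the compressed Hamiltonian equals the target up to a known affine map with polynomially bounded rational coefficients. Thresholds $\alpha,\beta$ transform accordingly, so the inverse-polynomial gap is preserved. Containment in $\QMA$ is standard, since the result is a 2-local Hamiltonian. No ancillas are needed, because the qubits are exactly the target's qubits.

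The main obstacle is compiler reachability. A degree-ordered greedy compiler with a fixed tie-breaking order must output exactly this packing: it must group the $d$ intended variables per qubit, use the intended axes, and fill every qubit. Simultaneously the instances must be connected, regular, and non-bipartite, and still be hard on both sides. This is where I invoke Theorem~\ref{thm:connected}.

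My approach to that theorem is as follows. Regularity makes all degrees equal, so greedy order reduces to the fixed input order, which we control by ordering the input. Greedy coloring then assigns colors along that order, and the color classes become the qubit groups and slots. I would pad the source instance with weight-balanced gadget edges, choosing either tiny rational weights or edges that cancel in the lifted Hamiltonian up to shift. These gadgets enforce regularity and make the greedy coloring produce exactly $d$-sized groups matching the target qubits. The perturbation must be bounded by $o$ of the promise gap, or absorbed exactly into the shift.

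Non-bipartiteness is then automatic for the $\QMA$-hard source, since bipartite instances fall into $\BQP\cap\StoqMA$. On the classical side, weighted MaxCut on the same structural class (connected, regular, non-bipartite) remains $\NP$-complete by a standard padding reduction from MaxCut on cubic graphs. Membership in $\NP$ is trivial.

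Finally I would check that the two hardness claims live in one structural class, possibly witnessed by different subfamilies, as the statement permits.
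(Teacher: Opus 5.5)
Your architecture matches the paper's. The $\QMA$ side is the exact aligned lift of a Piddock--Montanaro interaction (Theorem~\ref{thm:axis-classification}) pushed through Theorem~\ref{thm:connected}, which is precisely Corollary~\ref{cor:main}. Two of your steps, however, fail as written.

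\emph{Non-bipartiteness.} It is not ``automatic'' from Proposition~\ref{prop:bipartite-stoqma}, for three reasons. That proposition concerns the qubit interaction graph, whereas the statement is about the classical MaxCut support graph. It only gives a conclusion conditional on $\BQP\neq\QMA$. And hardness of a family never forces each of its members to be non-bipartite. It also says nothing about the $\NP$-hard classical instances, which must lie in the same class. In the paper non-bipartiteness is unconditional and structural: each axis part $V_a$ is a clique on $n\geq3$ vertices and so contains a triangle.

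\emph{The classical side.} Item (i) needs $\NP$-hardness on instances that the greedy compiler packs exactly $d$-to-$1$. A generic padding reduction from cubic MaxCut to connected regular non-bipartite graphs gives no such guarantee. You have to route the classical source through the same construction. Place $d$ identical copies of each unit-weight source edge on the $d$ clique parts, so the heavy-edge optimum is $d\cdot\operatorname{MaxCut}(G)$. Then keep the total filler weight below a quarter of the unit gap, so the threshold answer cannot change.

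\emph{Your sketch of Theorem~\ref{thm:connected}.} Since you invoke the theorem as a black box, the sketch is not load-bearing, but it contains one impossible option and omits the idea that makes it work. Positive-weight edges never cancel in the compressed Hamiltonian: each contributes $-\tfrac d2 w_eP_eQ_e$ with $w_e>0$, so only the tiny-weight filler option is available. Also, ``gadgets that make the greedy coloring produce $d$-sized groups'' is not yet a construction, and padding an arbitrary source graph leaves the coloring dependent on that graph's structure. The paper removes this dependence in two moves:
\begin{itemize}
\item It completes each axis part $V_a=\{x_{dr+\gamma_r(a)}\}$ to a clique, so source edges are just reweighted clique edges.
\item It joins the parts by shifted matchings $(v_{a,r},v_{b,r+1\bmod n})$.
\end{itemize}
Every row is then independent, and the graph is $(n+d-2)$-regular and connected. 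Greedy coloring in the fixed order gives row $r$ exactly color $r$: clique predecessors block colors $0,\ldots,r-1$, and cross-part neighbors lie only in rows $r\pm1$.
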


\paragraph{Main results.}

We classify QRAO along four dimensions, the first three of which any
complexity analysis of an analogous compressed encoding must also specify. These
are the set of reachable compressed Hamiltonians and their complexity, their
realization through an explicit problem and compiler, the
allowed compressed-state domain, and for QRAO the boundary between locally
alignable and nonalignable mixed Pauli axis interactions.
Figure~\ref{fig:quantum-roadmap} summarizes the representative energy problems and the state-domain classification.

\begin{figure}[tbp]
\centering
% Simplified from Figure 1 of the earlier Quantum manuscript.
% No arrows between interaction families: these are representative slices.
\resizebox{\linewidth}{!}{%
\begin{tikzpicture}[
  title/.style={anchor=west,font=\small\bfseries,text=black},
  card/.style={draw=black!45,rounded corners=2pt,align=center,
    font=\small,text=black,inner sep=5pt},
  interaction/.style={card,text width=4.38cm,minimum height=1.68cm},
  state/.style={card,font=\footnotesize,text width=3.04cm,minimum height=1.18cm},
  inclusion/.style={-{Stealth[length=1.7mm,width=1.2mm]},
    draw=black!65,line width=0.5pt}
]
\node[title] at (0,4.85) {(a) Representative energy problems under fixed packings};
\node[interaction,fill=gray!3] (one) at (2.39,3.62)
  {One aligned axis\\$d=1$ (no compression)\\[3pt]
   \textbf{\boldmath $\NP$-complete}\\[-1pt]
   {\footnotesize Theorem~\ref{thm:axis-classification}}};
\node[interaction,fill=gray!3] (field) at (7.68,3.62)
  {$ZZ$ couplings and $X$ fields\\Fixed $2$-to-$1$ packing\\[3pt]
   \textbf{\boldmath $\StoqMA$-complete}\\[-1pt]
   {\footnotesize Theorem~\ref{thm:stoqma-transverse}}};
\node[interaction,fill=gray!3] (axes) at (12.97,3.62)
  {Two or three aligned axes\\Fixed positive interaction\\[3pt]
   \textbf{\boldmath $\QMA$-complete}\\[-1pt]
   {\footnotesize Theorem~\ref{thm:axis-classification}}};

\node[title] at (0,2.13) {(b) State domains on one connected class after greedy compilation};
\node[anchor=west,font=\footnotesize,text=black] at (0,1.72)
  {Exactly $3$ variables per qubit; Theorems~\ref{thm:connected} and~\ref{thm:states}};
\node[state,fill=gray!3] (code) at (1.72,0.77)
  {QRAC codewords\\[3pt]\textbf{\boldmath $\NP$-complete}};
\node[state,fill=gray!3] (prod) at (5.69,0.77)
  {Pure product states\\[3pt]\textbf{\boldmath $\NP$-complete}};
\node[state,fill=gray!3] (sep) at (9.66,0.77)
  {Separable states\\[3pt]\textbf{\boldmath $\NP$-complete}};
\node[state,fill=gray!3] (all) at (13.63,0.77)
  {All quantum states\\[3pt]\textbf{\boldmath $\QMA$-complete}};
\draw[inclusion] (code.east) -- (prod.west);
\draw[inclusion] (prod.east) -- (sep.west);
\draw[inclusion] (sep.east) -- (all.west);
\end{tikzpicture}%
}
\caption{Selected QRAO complexity results.
(a) Three representative energy problems from fixed QRAO packings on
unrestricted interaction graphs.
(b) State-domain complexity on the connected, regular, non-bipartite
positive-weight MaxCut class $\mathcal C_3$ produced by stable greedy compilation (defined in Section~\ref{sec:compiler-realization}).
Arrows denote inclusion of state domains. QRAC codewords reproduce the
classical objective. Different source subfamilies are used to establish hardness for the different state domains.}
\label{fig:quantum-roadmap}
\end{figure}

\begin{enumerate}[leftmargin=*,itemsep=2pt]
\item \emph{Exact image and complexity spectrum.}

For a fixed (prescribed) packing, MaxCut with positive real weights realizes exactly the
cone of entrywise-nonnegative Pauli-coupling matrices
\begin{equation}
 -\frac d2\sum_{u<v}\sum_{P,Q}J_{uv}^{PQ}P_uQ_v,
 \qquad J_{uv}^{PQ}\geq0,
 \label{eq}
\end{equation}

where $P,Q\in\{X,Y,Z\}$ are Pauli matrices. We call such a two-qubit
interaction term aligned when $P=Q$, and a Hamiltonian aligned when every
two-qubit interaction term is aligned, although different terms may use
different Pauli axes. We consider the QRAO energy promise
problem defined below across the different cases. We show that for aligned single-axis compression
the problem is $\NP$-complete, a restricted two-to-one QRAO encoding with aligned
interactions along one axis and local fields along a second is
$\StoqMA$-complete, and for every fixed positive aligned interaction with two or three
Pauli axes the problem on unrestricted interaction graphs is $\QMA$-complete. The last family lies in both $\BQP$ and $\StoqMA$ when its interaction graph
is bipartite. The encoding therefore exposes $\NP$, $\StoqMA$, and $\QMA$
regimes without requiring an effective Hamiltonian approximation or gadgets. 

\item \emph{Compiler-level realization.}
For any fixed stable degree-ordered greedy compiler and fixed
tie-breaking order, we construct connected, regular, non-bipartite MaxCut instances with positive weights whose compiled packing is exactly full (i.e., every
qubit has $d$ variables mapped to it). For our hardness reductions we add small
weight edges to the classical support graph to force the compiler to produce the required
grouping of classical variables into qubits and their required assignments to
Pauli operators, while ensuring that the added interactions are sufficiently
small to preserve the original promise gap. Thus $\QMA$-hardness survives the
compiler stage rather than relying on a favorable packing chosen outside of it.
This compiler class includes the QRAO
encoder implemented in Qiskit Optimization 0.7.0 under the NetworkX 3.6.1
package's default degree-ordered greedy
coloring~\cite{QiskitQRAO2025,NetworkX2025}.

\item \emph{Quantum states, approximation, and compression factor.}
Within the same class of connected QRAO instances, 
we show that codeword,
product-state, and separable optimization are all $\NP$-complete, whereas
unrestricted optimization is $\QMA$-complete. Even for classically trivial
matching instances, entangled states increase the optimal QRAO relaxation value
over that of product states by exact factors $1$, $3/2$, and $2$ for one-,
two-, and three-axis encodings, respectively. Constant-factor approximation of
the three-axis quantum relaxation remains $\NP$-hard. 
Producing $n$ greedy color classes with $d$ variables each requires
$\Omega(dn^2)$ edges, and for fixed $d$ our construction uses within $O(n)$ edges of the exact lower bound
(Proposition~\ref{prop:density}).

\item \emph{A cycle boundary beyond aligned interactions.}

For isotropic mixed-axis interactions
(equal coefficients on the $XX$, $YY$, and $ZZ$ terms before local relabeling)
related by
cyclic Pauli permutations, we determine in linear time whether one can
independently relabel the Pauli axes on each qubit so that every interaction
becomes aligned. This is possible unless some graph cycle imposes contradictory
axis labels on the same qubit, and in that case such a cycle provides an
explicit certificate that simultaneous alignment is impossible.
Proposition~\ref{prop:twists} gives the formal criterion, its
linear-time algorithm, and the proof that a violating cycle rules out
alignment even by arbitrary local one-qubit unitaries.
\end{enumerate}

Graph topology supplies another boundary. A Hamiltonian is \emph{curable} if local single-qubit basis changes can make it stoquastic~\cite{MarvianLidarHen2019,KlassenTerhal2019}. 
On a bipartite interaction graph, a common local basis change makes the fixed aligned interaction stoquastic on every edge. 
For fixed positive aligned interactions, a further global permutation of the axes gives a Lee--Yang Hamiltonian. The recent gap result of Rayudu and Takahashi then places the bipartite energy problem in $\BQP$~\cite{RayuduTakahashi2026}.
The simplest genuinely compressed aligned encoding uses two noncommuting Pauli axes and
already enters into the $\QMA$-complete interaction regime. The $\StoqMA$ slice we identify shows
that compression into noncommuting observables does not lead only to the two extremes; adding
a sign-curable transverse field produces an intermediate quantum witness
class.

\paragraph{Relation to prior work.}
QRAO is a qubit-efficient encoding and rounding framework
for optimization introduced in Ref.~\cite{Fuller2024QRAO}. 
Subsequent work has studied approximation guarantees,
entanglement, noise, recursive methods, implementations, and hardware
experiments
~\cite{Teramoto2023Extensions,Teramoto2023Entanglement,Kondo2025Recursive,
Tamura2024Noise,Matsuyama2024Branch,He2025Nonvariational,
SharmaLau2026Hardware}. Here we discuss the complexity of the Hamiltonian
optimization problem created by compression, independently of the ansatz, training
method, or decoder; we provide some discussion of broader algorithmic and resource aspects in Section~\ref{sec:discussion}. 
A companion study asks whether an algorithm can efficiently prepare a state whose mean QRAO energy reaches the input graph's classical MaxCut optimum. For the standard $d=2,3$ encodings, a uniform polynomial-time procedure achieving this threshold on every input with inverse-polynomial preparation success would imply $\NP\subseteq\BQP$, even though a suitable QRAC product state exists for each input~\cite{HadfieldApproxQOpt2026}.
Our results leverage the fixed-interaction complexity classifications of Piddock and Montanaro
~\cite{PiddockMontanaro2017} as our primary hardness source. The important new analysis, techniques, and results are the determination of the exact
QRAO Hamiltonian image, the promise-preserving axis and transverse-field lifts, the
order-robust compiler construction, the density and cycle boundaries, and then
their combination into compression-induced $\NP$, $\StoqMA$, and $\QMA$
regimes. The complexity classification for optimization over product states follows from Ref.~\cite{KallaugherEtAl2025}. Our contribution is to realize this result within the same connected, compiler-produced class of QRAO instances used to classify optimization over QRAC codeword, separable, and unrestricted quantum states.

Closely related work considers Quantum MaxCut, the canonical $\QMA$-complete quantum analogue of
MaxCut~\cite{PiddockMontanaro2017,GharibianParekh2019}. Equation~\eqref{eq:qrao-qmc}
makes explicit that QRAO with isotropic three-axis compression can output instances of Quantum MaxCut. 
We do not claim this Hamiltonian identification alone as
new. Known complexity results do not by themselves show which Quantum MaxCut or more
general Hamiltonians are reachable from valid QRAO inputs and compiler
outputs. Our new results are the complete positive coupling image for
arbitrary fixed packings, the promise-preserving lifts, the compiler
realization, and the resulting complexity and state-domain classifications. 
Finally, Marwaha and Sud recently sharpened the classification of 2-local positive-weight Hamiltonians by identifying further $\StoqMA$-complete regions along with a new complexity phase boundary~\cite{MarwahaSud2026}. Rayudu and Takahashi subsequently proved a uniform-field spectral gap for Lee--Yang Hamiltonians and a polynomial-time adiabatic quantum algorithm for their ground-state energy estimation~\cite{RayuduTakahashi2026}. As shown in Proposition~\ref{prop:bipartite-stoqma}, this strengthens the containment of the bipartite positive aligned QRAO sector from $\StoqMA$ alone to both $\BQP$ and $\StoqMA$. Marwaha and Sud's classification concerns abstract
symmetric interactions, whereas our Hamiltonian image and compiler theorems instead identify the
reachable cone of QRAO Hamiltonians and show that hard subfamilies survive the compiler stage.

The paper is organized as follows. Section~\ref{sec:image} derives the compressed Hamiltonian image under QRAO. Section~\ref{sec:axis}
proves the $\NP$, $\StoqMA$, and $\QMA$ classifications. Section~\ref{sec:compiler}
gives the explicit compiler construction and density bound. Section~\ref{sec:states}
classifies complexity over restricted state domains, and Section~\ref{sec:approx} shows that approximation
hardness transfers through. We discuss the scope of our results in Section~\ref{sec:discussion}. 
\section{The Hamiltonian image of QRAO compression}
\label{sec:image}

\subsection{Positive MaxCut and QRAC codewords}

Consider the MaxCut problem. Let $G=(V,E,w)$ be a positive-weight graph. Using signs (bits) ~$x_i\in\{\pm1\}$, we write its objective as
\begin{equation}
 c_G(x)=\frac12\sum_{(i,j)\in E}w_{ij}(1-x_ix_j),
 \qquad w_{ij}>0.
 \label{eq:maxcut}
\end{equation}
The associated MaxCut threshold decision problem takes as input $G$ and a
rational threshold~$T$ and asks whether
$\max_{x\in\{\pm1\}^{|V|}}c_G(x)\geq T$.

For fixed $d\in\{1,2,3\}$, a standard $d$-to-$1$ quantum random access code assigns
each sign to a Pauli observable $P_i\in\{X,Y,Z\}$.  
For each bit string the corresponding QRAC codeword is the single-qubit quantum state whose
Bloch-vector components encode the $d$ assigned signs,
\[
\rho(x)=\frac{1}{2}\left(
I+\frac{1}{\sqrt d}\sum_{i=1}^{d}x_iP_i
\right),
\]
which satisfies
\begin{equation}
 \langle P_i\rangle=\operatorname{Tr}\!\left[\rho(x)P_i\right]
=\frac{x_i}{\sqrt d}.
\label{eq:qrac-codeword}
\end{equation}
One qubit can encode at most three signs with recovery probability greater than
one half~\cite{AmbainisEtAl2002,HayashiEtAl2006}. If the endpoints of a
classical edge occupy different physical qubits, the QRAO edge term is
~\cite{Fuller2024QRAO}
\begin{equation}
 h_{ij}^{(d)}=\frac{w_{ij}}2\bigl(I-dP_iP_j\bigr).
 \label{eq:qrao-edge}
\end{equation}
On a product of QRAC codewords this gives~\cite{Fuller2024QRAO,Teramoto2023Entanglement}
\begin{equation}
 \langle h_{ij}^{(d)}\rangle=\frac{w_{ij}}2(1-x_ix_j).
 \label{eq:qrao-codeword-edge}
\end{equation}
Summing over edges and comparing with
Eq.~\eqref{eq:maxcut} shows that QRAO codewords embed classical assignments
while preserving their objective values exactly.

We write $A_1=\{Z\}$, $A_2=\{X,Z\}$, and $A_3=\{X,Y,Z\}$ for the active Pauli axes
of the standard packings. A fixed packing is \emph{valid} when no
classical edge has both endpoints inside one packed block.

\begin{proposition}[Positive-coupling Hamiltonian image cone]
\label{prop:positive-cone}

Fix a full $d$-to-$1$ packing, where block $u$ consists of the
$d$ classical variables assigned to physical qubit $u$, with one variable
$x_{u,P}$ in each Pauli slot $P\in A_d$.
For $u<v$, let $J_{uv}^{PQ}\geq0$ be the total
weight of classical edges joining $x_{u,P}$ to $x_{v,Q}$. Then the compressed QRAO Hamiltonian is 
\begin{equation}
 \begin{aligned}
 H_{\QRAO}&=\frac{W_J}{2}I
 -\frac d2\sum_{u<v}\sum_{P,Q\in A_d}J_{uv}^{PQ}P_uQ_v,\\
 W_J&=\sum_{u<v,P,Q}J_{uv}^{PQ}.
 \end{aligned}
 \label{eq:positive-cone}
\end{equation}
Conversely, every polynomially specified collection
$\{J_{uv}\in\mathbb Q_{\geq0}^{d\times d}\}_{u<v}$ is realized exactly by a
positive-weight MaxCut instance with the same packing.
\end{proposition}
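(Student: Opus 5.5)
The proof is a direct bookkeeping argument in both directions, built on the edge formula Eq.~\eqref{eq:qrao-edge}.

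\emph{Forward direction.} Start from the definition of the QRAO Hamiltonian as the sum $H_{\QRAO}=\sum_{(i,j)\in E}h^{(d)}_{ij}$ of the edge terms. Validity of the packing guarantees that every classical edge joins variables on distinct qubits, so Eq.~\eqref{eq:qrao-edge} applies to every edge. For an edge joining $x_{u,P}$ to $x_{v,Q}$ with $u\neq v$, the term $P_iP_j$ is the two-qubit Pauli operator $P_uQ_v$, after orienting so that $u<v$; the operators on distinct qubits commute, so orientation is harmless. Grouping edges by the ordered slot pair $(u,P),(v,Q)$ and summing their weights gives exactly $J^{PQ}_{uv}$. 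This group contributes $\tfrac{1}{2}J^{PQ}_{uv}I-\tfrac d2 J^{PQ}_{uv}P_uQ_v$. Summing the identity parts over all groups gives $\tfrac{W_J}{2}I$, since $W_J$ is the total edge weight, and Eq.~\eqref{eq:positive-cone} follows. Nonnegativity of each $J^{PQ}_{uv}$ is immediate from $w_{ij}>0$.

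\emph{Converse.} Suppose a collection $\{J_{uv}\}$ of nonnegative rational $d\times d$ matrices is given. Build a graph with vertex set the $dn$ slot variables $x_{u,P}$, and for each $u<v$ and each $P,Q$ with $J^{PQ}_{uv}>0$ add a single edge between $x_{u,P}$ and $x_{v,Q}$ of weight $J^{PQ}_{uv}$. Edges with zero coefficient are omitted, which keeps all weights strictly positive as MaxCut requires. Distinct pairs of slots give distinct vertex pairs, so no parallel edges arise. No edge lies inside a block, so the prescribed packing is valid. The instance size is $O(d^2n^2)$ edges with the same rational bit-lengths, so it is polynomially specified. Applying the forward direction to this graph recovers exactly the prescribed $J$, and hence the prescribed Hamiltonian, including the shift $W_J/2$, which is determined by $J$.

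\emph{Main subtleties.} The only real obstacle is a careful treatment of two points. First, the statement assumes a \emph{full} packing: every slot in $A_d$ is occupied, and isolated vertices are allowed, so the map from slots to variables is a bijection. Second, when $d=1$ the slot set is just $\{Z\}$ and the formula reduces to the usual Ising image, so no special case is needed. I would also state explicitly that the grouping is well defined because $P_uQ_v$ depends only on the slot pair. Since MaxCut edges are undirected, an edge between $x_{u,P}$ and $x_{v,Q}$ must be counted once, with $u<v$ fixing its orientation, so that $J^{PQ}_{uv}$ and $J^{QP}_{vu}$ are not double counted.
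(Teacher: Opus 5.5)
Your proposal is correct and follows the same route as the paper. For the forward direction you sum the edge terms of Eq.~\eqref{eq:qrao-edge} grouped by slot pair, and for the converse you place one edge of weight $J_{uv}^{PQ}$ between $x_{u,P}$ and $x_{v,Q}$ for each nonzero entry, noting that every such edge crosses blocks so the packing remains valid. Your extra bookkeeping (orientation with $u<v$, no parallel edges, omitting zero entries so all weights stay strictly positive) only makes explicit what the paper's two-line proof leaves implicit.
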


\begin{proof}
Summing Eq.~\eqref{eq:qrao-edge} over the classical edges gives
Eq.~\eqref{eq:positive-cone}. Conversely, for every nonzero entry
$J_{uv}^{PQ}$, add the classical edge $(x_{u,P},x_{v,Q})$ with that weight.
Every such edge joins different blocks, so the fixed packing is valid.
\end{proof}

The proposition identifies the complete positive pairwise (2-local) image of QRAO Hamiltonians. In
particular, QRAO does not produce only aligned interactions $P_uP_v$. A fixed
packing realizes every nonnegative matrix of Pauli pairings on each physical
edge.

\subsection{Aligned Pauli axis lifts}

The basic computational complexity amplification mechanism already appears in a simple aligned slice of
this Hamiltonian cone.

\begin{definition}[Aligned Pauli axis lift]
\label{def:aligned-lift}
Fix $d\in\{1,2,3\}$, a nonempty set $A\subseteq\{X,Y,Z\}$ with $|A|\leq d$,
and positive rational constants $c_P$ for $P\in A$. For every vertex $u$ of
$G$, make one classical variable $u_P$ for each $P\in A$. For every edge
$(u,v)\in E$ and $P\in A$, add the edge $(u_P,v_P)$ with weight
$c_Pw_{uv}$. Group the variables $\{u_P:P\in A\}$ on physical qubit $u$ and
assign $u_P$ to Pauli observable $P$.
\end{definition}

The classical graph is a disjoint union of $|A|$ weighted copies of $G$. The
copies become coupled only when corresponding variables are assigned to
noncommuting observables of the same qubit.

\begin{proposition}[Exact axis-lifting identity]
\label{prop:exact-lift}
Let $C_A=\sum_{P\in A}c_P$ and $W=\sum_{(u,v)\in E}w_{uv}$. The aligned lift
satisfies
\begin{equation}
 H_{\QRAO}=\frac12C_AWI-\frac d2\sum_{(u,v)\in E}w_{uv}
 \sum_{P\in A}c_PP_uP_v.
 \label{eq:exact-lift}
\end{equation}
Maximizing $H_{\QRAO}$ is therefore exactly equivalent, under a known affine
transformation, to minimizing
\begin{equation}
 K_A(G)=\sum_{(u,v)\in E}w_{uv}\sum_{P\in A}c_PP_uP_v.
 \label{eq:KA}
\end{equation}
\end{proposition}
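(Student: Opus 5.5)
The plan is to obtain Eq.~\eqref{eq:exact-lift} by summing the single-edge formula Eq.~\eqref{eq:qrao-edge} over the classical edges produced by the aligned lift of Definition~\ref{def:aligned-lift}. This can be done either directly or by reading it off from Proposition~\ref{prop:positive-cone}.

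First, I will check validity of the packing. Every classical edge of the lift has the form $(u_P,v_P)$ with $(u,v)\in E$, so $u\neq v$ since $G$ has no self-loops. Its endpoints therefore lie on distinct physical qubits $u$ and $v$, and no edge falls inside a block. Because $|A|\leq d$, each block has at most $d$ variables occupying distinct Pauli slots. The packing may be partial when $|A|<d$. In that case I will not invoke the full-packing statement of Proposition~\ref{prop:positive-cone}. Instead I apply Eq.~\eqref{eq:qrao-edge} edge by edge, which only requires that the two endpoints occupy different qubits and that the encoding parameter is $d$.

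Next, I apply Eq.~\eqref{eq:qrao-edge} to the edge $(u_P,v_P)$ with weight $c_Pw_{uv}$. This gives the term $\tfrac{c_Pw_{uv}}{2}(I-dP_uP_v)$. Summing over $(u,v)\in E$ and $P\in A$, the identity parts total $\tfrac12\sum_{(u,v)}w_{uv}\sum_P c_P\,I=\tfrac12 C_AW I$. The remaining parts total $-\tfrac d2\sum_{(u,v)}w_{uv}\sum_P c_P P_uP_v$, which is exactly Eq.~\eqref{eq:exact-lift}. Equivalently, in the notation of Proposition~\ref{prop:positive-cone}, $J_{uv}^{PQ}=\delta_{PQ}c_Pw_{uv}$ and $W_J=C_AW$.

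Finally, Eq.~\eqref{eq:exact-lift} reads $H_{\QRAO}=\tfrac12 C_AW I-\tfrac d2 K_A(G)$. Since $d>0$, this affine map with negative slope sends eigenvalues of $K_A(G)$ to those of $H_{\QRAO}$ in reverse order. It also preserves eigenvectors and expectation values state by state. Hence $\lambda_{\max}(H_{\QRAO})=\tfrac12C_AW-\tfrac d2\lambda_{\min}(K_A(G))$. Thresholds $\alpha<\beta$ for the maximization map to thresholds for the minimization that are rescaled by the known rational factor $2/d$, so inverse-polynomial promise gaps are preserved.

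There is no real obstacle here. The only point needing care is the partial-packing case $|A|<d$, where the prefactor must stay $d$ (the code dimension) rather than $|A|$. That is why I sum Eq.~\eqref{eq:qrao-edge} directly instead of citing the full-packing proposition.
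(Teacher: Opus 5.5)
Your proposal is correct and takes the same route as the paper, whose proof simply applies Eq.~\eqref{eq:qrao-edge} to every copied edge $(u_P,v_P)$ and sums over $P\in A$ and $(u,v)\in E$. Your additional remarks, on keeping the prefactor $d$ when $|A|<d$ and on the explicit affine threshold map $\lambda_{\max}(H_{\QRAO})=\tfrac12C_AW-\tfrac d2\lambda_{\min}(K_A(G))$, are consistent with the paper's conventions and only make explicit what the paper leaves implicit.
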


\begin{proof}
Apply Eq.~\eqref{eq:qrao-edge} to every copied edge $(u_P,v_P)$ and sum over
$P\in A$ and $(u,v)\in E$.
\end{proof}

No approximation occurs in Proposition~\ref{prop:exact-lift}. Any gadgets or
interaction-strength range used by a source Local Hamiltonian reduction are
inherited, but the QRAO lift adds neither gadgets nor physical qubits. We use the aligned lift construction in the results below.

\section{NP, StoqMA, and QMA regimes of QRAO}
\label{sec:axis}
Here we use the standard Local Hamiltonian energy optimization promise (decision) problem~\cite{KitaevShenVyalyi2002,KempeKitaevRegev2006}, considered here in its equivalent 
maximum energy form. 
The Hamiltonian interaction
set and its coefficients are fixed constants. The input consists of a
positive rationally weighted graph $G$ and rational thresholds
$\alpha<\beta$ separated by an inverse polynomial in the encoded input length.
The task is to decide whether
$\lambda_{\max}(H_{\QRAO})\geq\beta$ or
$\lambda_{\max}(H_{\QRAO})\leq\alpha$, promised that one holds.

\subsection{NP to QMA computational complexity transition}

We now state our main theorem. 
\begin{theorem}[Pauli-axis classification]
\label{thm:axis-classification}
Let $r=|A|$ be the number of active axes in a fixed aligned lift with
$c_P>0$.
\begin{enumerate}[label=(\roman*),itemsep=1pt]
\item If $r=1$, the aligned QRAO energy promise problem is $\NP$-complete.
\item If $r\geq2$, the aligned QRAO energy promise problem is $\QMA$-complete.
\end{enumerate}
\end{theorem}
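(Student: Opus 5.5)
The plan is to reduce everything, through Proposition~\ref{prop:exact-lift}, to the complexity of the fixed-interaction Hamiltonian $K_A(G)=\sum_{(u,v)\in E}w_{uv}\sum_{P\in A}c_PP_uP_v$ on the interaction graph $G$ with positive weights. Maximizing $H_{\QRAO}$ is the same as minimizing $K_A(G)$ under the explicit affine map $\lambda_{\max}(H_{\QRAO})=\tfrac12 C_AW-\tfrac d2\lambda_{\min}(K_A(G))$. The multiplier $d/2$ is a fixed constant, so thresholds $\alpha<\beta$ for $H_{\QRAO}$ correspond to thresholds for $\lambda_{\min}(K_A)$ whose gap is rescaled by a constant. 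Hence inverse-polynomial gaps are preserved in both directions, and the QRAO problem is polynomially equivalent to the Local Hamiltonian problem for the fixed two-qubit interaction $h_A=\sum_{P\in A}c_PP\otimes P$ with nonnegative weights on arbitrary graphs. Since every lift is realized by a valid packing, hardness of that problem transfers directly to QRAO instances.

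For $r=1$, with say $A=\{P\}$, all terms commute and are simultaneously diagonal in the $P$ eigenbasis. Then $\lambda_{\min}(K_A)=c_P\min_{x\in\{\pm1\}^V}\sum w_{uv}x_ux_v$, so $\lambda_{\max}(H_{\QRAO})=c_P\cdot\frac d2\cdot 2\max_x c_G(x)+\text{const}$. This is a weighted MaxCut value. Membership in $\NP$ follows by using a classical assignment $x$ as the witness, since any eigenvalue is attained by a computational basis state in that basis and its energy is computed exactly in polynomial time. For hardness, reduce from the MaxCut threshold problem with polynomially bounded rational weights. A threshold $T$ maps to $\beta=\text{affine}(T)$ and $\alpha=\text{affine}(T-\delta)$, where $\delta$ is the granularity of cut values, at least $1/\mathrm{lcm}$ of the denominators. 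This $\delta$ is inverse polynomial after scaling the weights to integers, which is allowed because rescaling $w$ keeps the problem in the same family.

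For $r\ge2$, membership in $\QMA$ is standard, since $H_{\QRAO}$ is $2$-local with polynomially bounded coefficients and therefore an instance of $2$-Local Hamiltonian~\cite{KempeKitaevRegev2006}. For hardness, I invoke the Piddock--Montanaro classification~\cite{PiddockMontanaro2017}. For a fixed two-qubit interaction $h$ with positive weights, the problem is $\QMA$-complete unless $h$ is, up to local unitaries, of the form $\alpha Z\otimes Z+A\otimes I+I\otimes B$, or of the special $\StoqMA$ forms. Our $h_A=\sum_{P\in A}c_PP\otimes P$ with all $c_P>0$ is a fixed antiferromagnetic aligned interaction, namely an XY-type or XYZ Heisenberg-type interaction with at least two positive coefficients. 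Piddock and Montanaro show directly that $\sum_P c_P P\otimes P$ with at least two nonzero positive $c_P$ yields a $\QMA$-complete positive-weight problem; the $r=3$ isotropic case is Quantum MaxCut.

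The main obstacle is the $r=2$ case and anisotropic $c_P$. Here I must check carefully that $h_A$ falls in the $\QMA$-complete region of the classification rather than in a $\StoqMA$ exception. The two candidate exceptions are an interaction locally equivalent to one that is stoquastic after a uniform basis change, and one that is ferromagnetic. I will exclude both by computing the Pauli-coefficient matrix $\mathrm{diag}(c_P)$ restricted to $A$. Its positive signature, with at least two positive singular values of the same sign, is exactly their antiferromagnetic $\QMA$-complete condition. I also need to confirm that their reduction uses only the fixed weights allowed in our inputs, namely positive rationals with polynomial bit size, and that it keeps an inverse-polynomial gap. Both should hold because their gadget constructions produce polynomially bounded positive weights.
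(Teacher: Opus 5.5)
Your proposal is correct and follows essentially the same route as the paper. You reduce via the exact lift identity to the fixed interaction $\sum_{P\in A}c_PP\otimes P$, diagonalize the $r=1$ case into weighted MaxCut, and for $r\geq2$ invoke Piddock--Montanaro's Theorem~2(i), whose $\QMA$ region (all pairwise sums $J_x+J_y$, $J_x+J_z$, $J_y+J_z$ positive) is, for nonnegative coefficients, exactly the condition that at least two $c_P$ are positive. Your extra gap bookkeeping for $r=1$ is more careful than the paper's, but reduce from unweighted MaxCut so the cut-value granularity is $1$; rescaling rational weights by a possibly exponential lcm would break the polynomial magnitude bound.
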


\begin{proof}

For $r=1$, the active-axis set is $A=\{P\}$, and
Eq.~\eqref{eq:exact-lift} becomes
\[
H_{\QRAO}
=\frac{cW}{2}I-\frac{dc}{2}
 \sum_{(u,v)\in E} w_{uv}P_uP_v
\]
for a fixed $c>0$. Choose a one-qubit unitary $U$ satisfying
$UPU^\dagger=Z$ and apply the same basis change to every qubit. The
conjugated Hamiltonian is diagonal 
\[
U^{\otimes |V|}H_{\QRAO}(U^\dagger)^{\otimes |V|}
=\frac{cW}{2}I-\frac{dc}{2}
 \sum_{(u,v)\in E} w_{uv}Z_uZ_v.
\]
Since $dc>0$, this is equivalent, up to
a known additive shift and positive rescaling, to the Hamiltonian $C_G$ for MaxCut~\cite{Hadfield2021Representation}, i.e. it satisfies $C_G|x\rangle =c_G(x)$ for every computational basis state $|x\rangle$, $x\in\{\pm1\}^{|V|}$. 
A maximizing state of $H_\QRAO$ can be chosen as a tensor product of
single-qubit~$P$ eigenstates. Measuring any maximizing state in this product
basis yields a
classical MaxCut witness for any aligned single axis~$P\in \{X,Y,Z\}$, and as discussed weighted MaxCut is
$\NP$-complete~\cite{GareyJohnsonStockmeyer1976}.

For $r\geq2$, absorb the common positive factor $d/2$ in
Eq.~\eqref{eq:exact-lift} into the edge weights and set $J_a=c_a$ on active
axes and $J_a=0$ otherwise. The resulting fixed Hamiltonian interaction is
\begin{equation}
 J_xX\!X+J_yY\!Y+J_zZ\!Z,
\end{equation}
where $J_a\geq0$ and at least two coefficients are positive. Piddock and
Montanaro prove that the positive weight Local Hamiltonian problem for this
interaction, with no one-local terms, is $\QMA$-complete precisely throughout
this parameter region~\cite[Theorem~2(i)]{PiddockMontanaro2017}. The exact affine
identity in Proposition~\ref{prop:exact-lift} transfers the required promise gap. The source
reductions use polynomially bounded weights and thresholds representable with polynomially many bits,
and the lifted QRAO instance preserves these properties. Explicitly,
if the source minimum-energy thresholds are $A_0<B_0$ and
$C=C_AW/2$, the QRAO maximum-energy thresholds are
$C-dA_0/2$ and $C-dB_0/2$, separated by
$d(B_0-A_0)/2$. Membership in $\QMA$ follows because
$-H_{\QRAO}$ is a polynomially specified two-local Hamiltonian.
\end{proof}

The $r=1$ case occupies one slot per qubit and so does not compress. Activating a
second Pauli axis both compresses two classical variables into one qubit and
can lead to noncommuting terms in the compressed Hamiltonian. The resulting energy problem can then require a quantum instead of classical witness, giving the simplest explicit example of our complexity amplification principle in the paper.
This is a worst-case classification of the optimal relaxation value, not a universal claim that
every multi-axis (compressed) instance is hard.

\subsection{Stoquastic boundaries}

A Hamiltonian is \emph{stoquastic} in a basis when all off-diagonal matrix
elements in that basis are real and nonpositive. The corresponding
stoquastic Local Hamiltonian promise problem is complete for $\StoqMA$
~\cite{BravyiBessenTerhal2006}. QRAO realizes this intermediate class in a
natural field-augmented slice.

\begin{theorem}[Transverse-field $\StoqMA$ completeness]
\label{thm:stoqma-transverse}
Fix a two-to-one packing with signs labeled $z_u,x_u$ each assigned to
$Z_u,X_u$ on physical qubit $u$. For positive rational $J_{uv}$ and $h_u$,
consider the objective function 
\begin{equation}
 c_{\rm TF}(z,x)=\frac12\sum_{(u,v)\in E}J_{uv}(1-z_uz_v)
 +\frac12\sum_u h_u(1-x_u).
 \label{eq:tf-qubo}
\end{equation}
At inverse-polynomial additive precision, the associated QRAO
energy promise problem is $\StoqMA$-complete. The classical
threshold decision problem remains $\NP$-complete.
\end{theorem}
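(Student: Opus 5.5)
First I will compute the compressed Hamiltonian explicitly. The quadratic terms $(z_u,z_v)$ join the $Z$ slots of different qubits, so Eq.~\eqref{eq:qrao-edge} with $d=2$ turns each one into $\tfrac{J_{uv}}{2}(I-2Z_uZ_v)$. For the linear terms I will use the QRAC convention $x_u\mapsto\sqrt d\,X_u$; this is the convention that makes Eq.~\eqref{eq:qrac-codeword} reproduce $\langle\sqrt2X_u\rangle=x_u$ on codewords, so each field term becomes $\tfrac{h_u}{2}(I-\sqrt2X_u)$. Summing gives
\[
H_{\QRAO}=\tfrac12\Bigl(\textstyle\sum J_{uv}+\sum h_u\Bigr)I-\sum_{(u,v)}J_{uv}Z_uZ_v-\tfrac{\sqrt2}{2}\sum_u h_uX_u .
\]
Maximizing this is equivalent, by a known affine map, to minimizing the transverse-field Ising Hamiltonian $\sum J_{uv}Z_uZ_v+\tfrac{1}{\sqrt2}\sum h_uX_u$ with antiferromagnetic couplings and positive transverse field. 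One technical point is that $\sqrt2$ is irrational. I expect this is harmless: absorb it by feeding the reduction rational fields $h_u=\sqrt2\,h'_u$ rounded to inverse-polynomial precision, and let the rounding error of size $\|\Delta\|\le\sum|\delta h_u|$ eat only part of the promise gap.

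\textbf{Membership in $\StoqMA$.} In the computational basis the $Z_uZ_v$ terms are diagonal, and the off-diagonal part of $-H_{\QRAO}$ is $+\tfrac{\sqrt2}{2}\sum h_uX_u$, which is nonnegative. Conjugating by $\prod_u Z_u$ flips $X_u\to -X_u$ and makes the off-diagonal entries nonpositive, so $-H_{\QRAO}$ is a 2-local stoquastic Hamiltonian with polynomially bounded coefficients. The maximum-energy promise problem for $H_{\QRAO}$ is then the standard stoquastic Local Hamiltonian problem, which lies in $\StoqMA$ by Bravyi--Bessen--Terhal~\cite{BravyiBessenTerhal2006}.

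\textbf{Hardness.} This is the main obstacle: I need a $\StoqMA$-hardness result for exactly this family, namely antiferromagnetic $ZZ$ couplings with positive weights and a positive transverse field, with no gadgets beyond weighted graphs. I would invoke the Bravyi--Hastings result that the transverse-field Ising model ground energy is $\StoqMA$-complete. If instead I rely on Piddock--Montanaro or Cubitt--Montanaro, I would use their classification, in which $ZZ$ interactions with arbitrary-sign or positive weights plus $X$ fields form the $\StoqMA$-complete TIM case, and check that positive antiferromagnetic weights suffice. Sign flips of the fields are handled by conjugating with $Z_u$, and ferromagnetic versus antiferromagnetic bonds are where the care is needed. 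If the source needs both coupling signs, I would fall back on citing the positive-weight TIM hardness of Marwaha--Sud~\cite{MarwahaSud2026}.

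Given a source instance with rational positive $J$, $h'$ and thresholds $A_0<B_0$, I would set $h_u$ to the rational approximation of $\sqrt2 h'_u$ described above. The QRAO thresholds are then $C-A_0$ and $C-B_0$, up to the rounding slack. The construction uses the same $n$ qubits and a valid packing, since every edge joins $Z$ slots of distinct qubits.

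\textbf{Classical claim.} Restricted to codewords, Eq.~\eqref{eq:qrao-codeword-edge} shows that the objective is exactly the classical $c_{\rm TF}$. Maximizing $c_{\rm TF}$ sets $x_u=-1$ for all $u$, which contributes $\sum_u h_u$. What remains is weighted MaxCut on $(V,E,J)$ plus a known constant. The decision problem is therefore in $\NP$, and it is $\NP$-hard by a direct reduction from weighted MaxCut: take any $h_u>0$, for instance $h_u=1$, and shift the threshold by $\sum_u h_u$.
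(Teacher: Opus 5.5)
Your proposal is correct and takes the same route as the paper. You compute the same compressed Hamiltonian, conjugate every qubit by $Z$ to show $\StoqMA$ membership, reduce from a positive-weight transverse-field Ising source with the $\sqrt2$ absorbed by inverse-polynomially accurate rational rounding of the fields, and use the $x_u=-1$ argument to reduce the classical problem to weighted MaxCut. The one point to settle is the hardness citation: you need a result for positive (antiferromagnetic) $ZZ$ couplings with $X$ fields, because positive MaxCut weights never produce ferromagnetic bonds, so a hardness result that needs couplings of both signs would not suffice. The paper uses Piddock--Montanaro's Theorem~5, which gives exactly this positive-coupling $\StoqMA$-completeness, so the Marwaha--Sud fallback is unnecessary.
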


\begin{proof}
The compressed QRAO Hamiltonian is
\begin{equation}
 H_{\rm TF}=\frac{W_J+W_h}{2}I
 -\sum_{(u,v)\in E}J_{uv}Z_uZ_v
 -\frac1{\sqrt2}\sum_u h_uX_u,
 \label{eq:tf-qrao}
\end{equation}
where $W_J=\sum_{uv}J_{uv}$ and $W_h=\sum_u h_u$. Maximizing
Eq.~\eqref{eq:tf-qrao} is equivalent up to a simple affine transformation to minimizing
\begin{equation}
 K_{\rm TF}=\sum_{(u,v)\in E}J_{uv}Z_uZ_v
 +\frac1{\sqrt2}\sum_u h_uX_u.
\end{equation}
Conjugating every qubit by $Z$ reverses the sign of the $X$ terms and leaves
the $ZZ$ terms fixed, so the resulting Hamiltonian is stoquastic and the problem lies in
$\StoqMA$. Piddock and Montanaro show the positive-coefficient $\{ZZ,X\}$ Hamiltonian problem is
$\StoqMA$-complete~\cite[Theorem~5]{PiddockMontanaro2017}. They also observe that every transverse-field coefficient may be taken
nonnegative because conjugating an individual qubit by $Z$ flips its $X$ term
without changing any $ZZ$ term. Thus our coefficients $J_{uv}>0$ and
$h_u/\sqrt2>0$ follow their convention.
Given a source field $g_u\geq0$ and promise gap $\Delta$, choose positive rational
$h_u$ with $|h_u/\sqrt2-g_u|\leq\Delta/(8|V|)$. The operator-norm error is at
most $\Delta/8$, so the promise remains inverse polynomial. 
Additional promise gap bookkeeping details are given in Appendix~\ref{app:promise}.
Hence, Eq.~\eqref{eq:tf-qrao} transfers the thresholds exactly up to this controlled
approximation. 

Classically, every $x_u$ optimizes independently to $-1$, while
the remaining objective is weighted MaxCut. Hence the classical threshold
problem is again $\NP$-complete.
\end{proof}

The transverse-field construction in Theorem~\ref{thm:stoqma-transverse}
requires the specified pairing of each $x_u$ with its corresponding $z_u$.
The variables encoded in the transverse field have no quadratic edges, so a graph-coloring compiler need not
place each $x_u$ beside its intended $z_u$ without additional control.
We do not establish $\StoqMA$-completeness under greedy compilation.

Graph structure provides a second stoquastic complexity boundary within the same 2-local Hamiltonian 
family.

\begin{proposition}[Bipartite aligned \texorpdfstring{$\BQP$ and $\StoqMA$}{BQP and StoqMA} containment]
\label{prop:bipartite-stoqma}
Fix a positive aligned interaction with at least two active Pauli axes. If the QRAO Hamiltonian interaction graph is bipartite, then its aligned QRAO energy promise problem
lies in $\BQP\cap\StoqMA$. Consequently,
$\QMA$-hardness of this restriction would imply $\StoqMA=\BQP=\QMA$.
\end{proposition}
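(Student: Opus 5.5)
The plan is to work with the minimization form $K_A(G)=\sum_{(u,v)\in E}w_{uv}\sum_{P\in A}c_PP_uP_v$ from Proposition~\ref{prop:exact-lift}. The affine identity there converts QRAO maximum-energy thresholds into minimum-energy thresholds of $K_A(G)$, with the gap rescaled by the fixed factor $d/2$. Both containments then reduce to statements about the 2-local Hamiltonian $K=\sum_{(u,v)\in E} w_{uv}(J_xX_uX_v+J_yY_uY_v+J_zZ_uZ_v)$, where $J_a\ge0$ and at least two $J_a>0$, on a bipartite graph with bipartition $V=L\sqcup R$.

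\emph{Step 1 (sign curing).} Conjugate every qubit in $R$ by a single Pauli $Q$. On each edge exactly one endpoint is conjugated, so every term $P_uP_v$ with $P\neq Q$ changes sign and the $Q_uQ_v$ term is unchanged. I choose $Q$ to be an axis such that the remaining off-diagonal terms become nonpositive in a suitable basis. First globally permute the axes, via a single-qubit Clifford applied on all qubits, so that $Y$ is the axis to keep fixed; one can always make $Y$ an active or inactive axis as convenient. Then take $Q=Y$. Conjugation by $Y$ flips $X$ and $Z$, so each edge term becomes $-J_xXX+J_yYY-J_zZZ$ with weight $w_{uv}>0$. In the computational ($Z$) basis, $XX$ and $YY$ are off-diagonal with $\langle 00|XX|11\rangle=1$, $\langle 00|YY|11\rangle=-1$, and $\langle 01|XX|10\rangle=\langle 01|YY|10\rangle=1$. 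So the off-diagonal part of $-J_xXX+J_yYY$ has entries $-(J_x+J_y)\le0$ on the $|00\rangle\leftrightarrow|11\rangle$ pair and $-J_x+J_y$ on the $|01\rangle\leftrightarrow|10\rangle$ pair. Stoquasticity therefore requires $J_y\le J_x$, which I arrange by choosing the global axis permutation so that the axis placed at $Y$ has the smallest coefficient among the off-diagonal pair, i.e.\ relabel so that $J_y=\min_a J_a$. The diagonal $-J_zZZ$ is irrelevant for stoquasticity. The resulting Hamiltonian is a polynomially specified 2-local stoquastic Hamiltonian, unitarily equivalent to $K$ by a product of single-qubit unitaries. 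Hence its ground energy threshold problem lies in $\StoqMA$ by Bravyi--Bessen--Terhal, and the thresholds transfer unchanged.

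\emph{Step 2 ($\BQP$).} Put the cured Hamiltonian (after a further global axis permutation if needed) into the Lee--Yang form used by Rayudu and Takahashi, namely a ferromagnetic $XY$-plus-$Z$-type interaction with nonnegative couplings. Then invoke their uniform spectral-gap bound and polynomial-time adiabatic ground-energy estimation to inverse-polynomial precision~\cite{RayuduTakahashi2026}. This places the promise problem in $\BQP$. I expect this step to be the main obstacle, because one must check that the precise hypotheses of their theorem are met: the sign conventions after curing, the ordering of the coefficients, and polynomial bounds on the weights. If the form does not match directly, I would try alternative choices of which axis is preserved by the $R$-side conjugation. Since all three coefficient orderings are reachable by global permutation, one choice should land in their admissible region.

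\emph{Step 3 (consequence).} If the bipartite restriction were $\QMA$-hard, then $\QMA\subseteq\BQP\cap\StoqMA$. Combined with $\BQP\subseteq\StoqMA\subseteq\QMA$, this gives $\StoqMA=\BQP=\QMA$.
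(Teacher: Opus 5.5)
Your Step~1 is correct. Conjugating the $R$ side by the minimum-coefficient axis gives $-J_xXX+J_yYY-J_zZZ$ with nonpositive off-diagonal entries. This is a direct curing of the same kind the paper exhibits after its proof. The paper's main $\StoqMA$ argument instead notes that the two flipped coefficients sum to $-(J_P+J_Q)<0$, which places the interaction outside Piddock--Montanaro's $\QMA$ region.

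The genuine gap is Step~2, which carries the entire $\BQP$ half of the statement and which you leave as an expectation. You never state or verify the hypothesis Rayudu--Takahashi require, and the target you name is not it. If it means all three couplings ferromagnetic, it is unreachable for $r=3$. Local unitaries act on each edge's coupling matrix as $D\mapsto R_uDR_v^{\mathsf T}$ with $R_u,R_v\in\mathrm{SO}(3)$, preserving $\det D=J_xJ_yJ_z>0$, whereas $-\mathrm{diag}(a,b,c)$ with $a,b,c\geq0$ has nonpositive determinant.

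What is needed is the Lee--Yang condition: no cross-couplings, and on every edge a ferromagnetic coupling along the dominant (field) axis whose coefficient bounds the absolute values of the other two. Your cured form meets it after one more global relabeling. Since $J_y=\min_aJ_a$, the larger of $J_x,J_z$ is $\max_aJ_a$ and is ferromagnetic. Moving it onto the dominant axis gives $-J_{\max}ZZ-J'XX+J_{\min}YY$ with $J_{\max}\geq\max\{J',J_{\min}\}$. The paper reaches the equivalent form $-J_xZZ+J_zXX-J_yYY$, with $J_x=\max_aJ_a$, by conjugating one side by $Z$ and swapping $X\leftrightarrow Z$.

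You also omit the step that turns their gap theorem into a decision procedure. Their spectral-gap bound $h/4$ holds only after adding a uniform field $h>0$ along the dominant axis. At $h=0$ the ground space can be degenerate; by Lieb--Mattis this happens, e.g.\ in the isotropic case on a bipartite graph with unequal sides. So no adiabatic guarantee is available directly, and you must add the field and control it. With $n$ qubits and minimum-energy promise gap $\Delta$ for $K_A(G)$, take $h=\Delta/(8n)$. The field term then has norm $nh=\Delta/8$ and shifts the ground energy by at most $\Delta/8$, while the perturbed gap $h/4$ stays inverse polynomial. Estimating the perturbed ground energy to within $\Delta/8$ then decides the original promise.

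Finally, Step~3 uses $\BQP\subseteq\StoqMA$, which is not a known inclusion. Since $\StoqMA\subseteq\mathsf{SBP}\subseteq\mathsf{AM}$, it would place $\BQP$ in $\mathsf{AM}$. You only need $\BQP\subseteq\QMA$ and $\StoqMA\subseteq\QMA$ separately; together with $\QMA\subseteq\BQP\cap\StoqMA$ these give $\StoqMA=\BQP=\QMA$.
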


\begin{proof}
We use the bipartite basis change observed by Piddock and Montanaro in their
lattice discussion~\cite[Sec.~1.1]{PiddockMontanaro2017}.
Conjugate every qubit on one side of the bipartition by the inactive Pauli in
the two-axis case, or by any Pauli in the three-axis case. The conjugation
flips two coefficients on every edge. Write the conjugating axis as $R$ and
the other two axes as $P,Q$. Up to a permutation of coordinates, the
coefficient triple becomes $(J_R,-J_P,-J_Q)$. In the two-axis case $R$ is
inactive and $J_R=0$. In the three-axis case $R$ may be any active axis. In
both cases the sum of the two flipped coefficients is
$-(J_P+J_Q)<0$. For an interaction
$J_xX\!X+J_yY\!Y+J_zZ\!Z$, Piddock and Montanaro's $\QMA$ parameter region requires all three
pairwise sums $J_x+J_y$, $J_x+J_z$, and $J_y+J_z$ to be positive
~\cite[Theorem~2]{PiddockMontanaro2017}.

Ref.~\cite{PiddockMontanaro2017} therefore places the
transformed interaction in $\StoqMA$, giving containment in this complementary
region but not completeness. (This is of course  consistent with the standard inclusions
$\NP\subseteq\mathsf{MA}\subseteq\StoqMA$~\cite{BravyiBessenTerhal2006}.)
Local basis changes preserve the spectrum, and
Proposition~\ref{prop:exact-lift} transfers the promise as required.

For $\BQP$ containment, relabel the axes so that
$J_x=\max\{J_x,J_y,J_z\}$. Conjugating one side of the bipartition by $Z$
and then globally interchanging the $X$ and $Z$ axes maps the fixed interaction
to
\begin{equation}
 -J_xZZ+J_zXX-J_yYY. \label{eq:lee-yang-bipartite}
\end{equation}
With positive edge weights, each edge has zero cross-couplings and satisfies
the Lee--Yang condition $J_x\geq\max\{|J_z|,|-J_y|\}$.
Rayudu and Takahashi prove that adding a uniform field of
strength $h>0$ along the dominant axis gives a nondegenerate ground state with
spectral gap at least $h/4$, and consequently a polynomial-time adiabatic
quantum algorithm for inverse-polynomial ground-state energy estimation
~\cite{RayuduTakahashi2026}.

Let $\Delta$ be the promise gap and $n$ the number of
physical qubits. Choosing $h=\Delta/(8n)$ changes the ground energy by at most
$nh=\Delta/8$, while the field-perturbed Hamiltonian has inverse-polynomial
gap. Estimating its ground energy to additive error at most $\Delta/8$ therefore
decides the original promise problem. The local basis
changes and the affine transformation in Proposition~\ref{prop:exact-lift}
are efficiently computable, proving containment in $\BQP$.
Since $\BQP,\StoqMA\subseteq\QMA$, $\QMA$-hardness of the restriction
would imply $\StoqMA=\BQP=\QMA$.

\end{proof}

The curing basis can also be exhibited directly. After a common permutation
of the axes, write the fixed interaction as
$J_xXX+J_yYY+J_zZZ$ with $J_x\geq J_y\geq0$. For two active axes,
take $J_z=0$. Conjugating the qubits in one part of the bipartition by $Z$
gives
\[
-J_xXX-J_yYY+J_zZZ.
\]
Its only potentially nonzero off-diagonal entries in the computational basis
are the nonpositive values $-(J_x-J_y)$ between $|00\rangle$ and $|11\rangle$,
and $-(J_x+J_y)$ between $|01\rangle$ and $|10\rangle$.
As every edge carries a positive multiple of the same fixed interaction,
this basis cures all edges at once. The argument applies to the minimum energy
Hamiltonian $K_A(G)$ of Proposition~\ref{prop:exact-lift}. The negative sign
in Eq.~\eqref{eq:exact-lift}
then converts its energy promise to the QRAO maximization convention.

Thus, for the QRAO Hamiltonian interaction graph an odd cycle is necessary for $\QMA$-hardness within the positive aligned
sector, unless $\StoqMA=\BQP=\QMA$. This does not weaken
Theorem~\ref{thm:axis-classification}, whose unrestricted $\QMA$-hardness is
witnessed by non-bipartite source families.

\begin{remark} \label{rem:stoqma-signed}
One-local terms are unnecessary if we extend the input problem class to QUBO problems, which includes arbitrary signs of edge terms, and so contains the positive weight MaxCut class used elsewhere
in the paper as a proper subset.
Under a fixed full three-to-one packing, assign coefficients
$(+w_{uv},-w_{uv},+\gamma w_{uv})$ to the $(X,Y,Z)$ copies of each positive
edge, with fixed rational $\gamma>1$. The coefficient map in Eq.~\eqref{eq:positive-cone}, extended to signed weights, gives
the QRAO image $-\tfrac32K_\gamma$ for
$K_\gamma=\sum_{uv}w_{uv}(X_uX_v-Y_uY_v+\gamma Z_uZ_v)$, up to an
irrelevant identity shift. Maximizing the QRAO image is therefore equivalent
to minimizing $K_\gamma$, whose ground-state energy problem is
$\StoqMA$-complete. Indeed, the ``Furthermore'' clause of
Piddock and Montanaro's Theorem~2 states $\StoqMA$-completeness when
$\alpha=-\beta\ne0$, $\alpha+\gamma>0$, and $\beta+\gamma>0$; taking
$\alpha=1$, $\beta=-1$, and $\gamma>1$ gives exactly $K_\gamma$
~\cite[Theorem~2]{PiddockMontanaro2017}.
\end{remark}

Theorems~\ref{thm:axis-classification} and
~\ref{thm:stoqma-transverse}, together with Proposition~\ref{prop:bipartite-stoqma}, expose
$\NP$, $\StoqMA$, and $\QMA$ regimes within QRAO, while identifying
which conclusions depend on fields, signs, or graph topology.

\subsection{Canonical examples with equal axis coefficients}
Here the coefficient is the same on every active Pauli axis, while the edge weights $w_{uv}$ may vary. We give two such QRAO Hamiltonian families.

First, for $d=2$ consider the aligned Hamiltonian 
\[
H_{\QRAO}
=
WI-\sum_{(u,v)\in E}w_{uv}(X_uX_v+Z_uZ_v).
\]
A common single-qubit basis change applied across all qubits maps it to or from the antiferromagnetic
XY model, up to the identity shift and the sign arising from our
maximum-energy convention. Consequently, the positive two-axis
QRAO energy promise problem with equal axis coefficients is $\QMA$-complete~\cite{PiddockMontanaro2017}, 
confirming a special case of Theorem~\ref{thm:axis-classification} above.

Second, for $d=3$ consider the Quantum MaxCut  Hamiltonian~\cite{GharibianParekh2019}
\begin{equation}
 H_{\QMC}(G)=\frac12\sum_{(u,v)\in E}w_{uv}
 \left(I-X_uX_v-Y_uY_v-Z_uZ_v\right).
 \label{eq:qmc}
\end{equation}
Its maximum energy promise problem is $\QMA$-complete by the
positive-weight antiferromagnetic Heisenberg model result of
Piddock and Montanaro~\cite[Theorem~2]{PiddockMontanaro2017}.
Note that some Quantum MaxCut conventions place $1/4$ rather than $1/2$ in front of
each edge term.
Our normalization matches the MaxCut and QRAO codeword
conventions above, and differs only by a possible overall factor of two.
For QRAO we have the exact identity
\begin{equation}
 H_{\QRAO}(G^{\uparrow3})=3H_{\QMC}(G),
 \label{eq:qrao-qmc}
\end{equation}
where $G^{\uparrow 3}$ denotes the fixed three-axis lift of the classical input graph $G$,\footnote{$G^{\uparrow 3}$ has variables $u_X,u_Y,u_Z$, packed onto qubit $u$, and replaces each edge $(u,v)$ by $(u_X,v_X)$, $(u_Y,v_Y)$, and $(u_Z,v_Z)$ of the same weight. Thus $G^{\uparrow 3}$ is the classical support graph, whereas the compressed Hamiltonian interaction graph is $G$.} and so the QRAO optimization problem is once again seen to be
$\QMA$-complete. 
In contrast, as explained above, restricting the same Hamiltonian to QRAC
codewords recovers classical MaxCut. As $G^{\uparrow3}$ consists of three disjoint copies of $G$, its
optimal codeword value is $3\cdot \operatorname{MaxCut}(G)$.

Hence QRAO contains familiar $\QMA$-complete spin models as exact compressed images,
not merely as low-energy effective theories, and so cannot avoid their computational complexity in the general worst-case. We leverage the Hamiltonians above in some of the results below.

\section{Compiler-layer hardness and existing software realization}
\label{sec:compiler}

The aligned Pauli axis lift fixes which variables share a physical qubit. An arbitrary QRAO
implementation instead may derive the packing from the classical support graph. We reserve \emph{Hamiltonian interaction graph} for the graph on physical qubits joined by nonzero two-qubit couplings.
We now show that a natural class of greedy QRAO compilers can be forced to output computationally hard compressed Hamiltonians. 

The edges needed to encode the source Hamiltonian also affect the compiler's
coloring. We include them in a regular support graph whose greedy coloring
yields the desired color classes, then assign the remaining edges positive
weights small enough to preserve the promise gap.

\begin{definition}[Stable degree-ordered greedy compiler]
\label{def:grouping-rule}
Fix a total order $\prec$ on the variables and $d\in\{2,3\}$.
We call a compiler stable degree-ordered greedy if it performs
the following steps.
\begin{enumerate}[label=(G\arabic*),itemsep=1pt]
\item Greedily color the vertices in nonincreasing degree order, preserving
$\prec$ inside each degree tie, and assign each vertex the smallest available
nonnegative color.
\item List every color class using the same order $\prec$.
\item Partition each list into consecutive blocks of at most $d$ variables and
assign the positions in a block to fixed distinct Pauli axes.
\end{enumerate}
\end{definition}

The definition separates the compiler into coloring, grouping,
and Pauli slot assignment. Theorem~\ref{thm:connected} below applies to any fixed implementation
of these three steps, with any fixed tie-breaking order. Concretely, we report that the
\texttt{QuantumRandomAccess\allowbreak Encoding} function in the publicly available Qiskit
Optimization 0.7.0 software package works by constructing this graph (called the quadratic interaction graph in the implementation), invoking 
NetworkX 3.6.1's default \texttt{largest\_first} greedy coloring, listing each
color class in increasing variable index order, and then packing consecutive blocks
into fixed Pauli slots~\cite{QiskitQRAO2025,NetworkX2025}. With increasing
variable index as the total order~$\prec$, this procedure is an exact instantiation of Definition~\ref{def:grouping-rule}. The correspondence follows at the source code level because Qiskit inserts all variable nodes in increasing
index order and sorts each color class before packing, while NetworkX's stable
degree sort preserves node-insertion order for degree ties. These statements concern the specified software implementations rather than a general API guarantee. We make no claims that other releases, preprocessing pipelines, or compiler heuristics
will follow the same rules or have the same consequences. 
The code and data archive~\cite{Hadfield2026QRAOCode} includes a version-checking regression script. It constructs the
classical support graph used in the proof of Theorem~\ref{thm:connected} and verifies that the compiler
produces the predicted full packing for $d=2,3$ using these exact
releases.

\subsection{Connected compiler realization with local-axis control} \label{sec:compiler-realization}
For $d=2$, $S_2$ denotes the two-element permutation group acting on
the active Pauli axes. 
For $d=3$, Theorem~\ref{thm:connected} below allows an independent cyclic
relabeling of the $X$, $Y$, and $Z$ slots at each qubit. Such
relabelings are implemented by sign-free single qubit Clifford unitaries, so they do not reverse the sign of any interaction coefficient, and positive QUBO weights remain positive under them. 
General Pauli-axis permutations are excluded because an odd
permutation necessarily introduces a sign flip on at least one active axis.
The allowed cyclic Pauli permutations are the identity
and the two nontrivial common shifts of $X,Y,Z$
that give the cyclic group of
order three $C_3 \subset S_3$, 
and are implemented by
powers of a sign-free single qubit Clifford (i.e., one that permutes the Pauli axes
without introducing minus signs). The relative cyclic offset between the Pauli
labels at the endpoints of an oriented edge is called its \emph{twist}.
The sign-free Clifford $C$ defined in Eq.~\eqref{eq:cyclic-clifford} of Appendix~\ref{app:twists} implements $X\mapsto Y\mapsto Z\mapsto X$.
Writing $k_u\in\mathbb Z_3$ for the local frame at qubit $u$ gives
$t_{uv}=k_v-k_u$ on an oriented edge. Hence a twist assignment comes from
consistent local frames exactly when its oriented sum vanishes modulo three
around every cycle. A spanning-tree traversal either recovers the frames or
returns a violating cycle in $O(|V|+|E|)$ time.

For each $d\in\{2,3\}$ and fixed stable degree-ordered greedy
compiler, let
$\mathcal C_d$ denote the structural class of positive-weight MaxCut
instances, together with their compiled (compressed) Hamiltonians obtained by the
support-and-packing construction below, without restricting
the source problem thresholds.

\begin{theorem}[Compiler realization theorem]
\label{thm:connected}
Fix $d\in\{2,3\}$ and a stable degree-ordered greedy compiler.
Let $H_{\mathrm{src}}$ be any positive aligned $d$-axis QRAO Hamiltonian on
$n\geq3$ qubits, with an inverse-polynomial promise gap $\Delta$. 
For every fixed variable order and every
choice of permutation 
$\gamma_r\in S_2$ for $d=2$ or cyclic
permutation $\gamma_r\in C_3$ for $d=3$, one can construct in polynomial time
a positive-weight MaxCut instance whose compiled Hamiltonian is
\begin{equation}
 H_{\mathrm{out}}=U_\gamma H_{\mathrm{src}}U_\gamma^\dagger
 +\alpha I+V_0,
 \qquad \lVert V_0\rVert\leq\Delta/8,
 \label{eq:controlled-completion}
\end{equation}
with $\lVert \cdot \rVert$ the spectral norm. 
Here $U_\gamma$ is the tensor product of local single-qubit
unitaries implementing the permutations $\gamma_r$, $\alpha$ is a known
constant, and $V_0$ is a nonidentity perturbation contributed by
filler edges.
The constructed classical support graph is connected, $(n+d-2)$-regular, and non-bipartite with 
\begin{equation}
 d\binom n2+\binom d2n
 \label{eq:edge-count}
\end{equation}
edges.
Every greedy color class contains exactly $d$ variables, so the compiler uses
exactly $n$ physical qubits for $dn$ classical variables. 
  \end{theorem}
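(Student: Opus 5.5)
The plan is an explicit support-graph construction on $dn$ variables $x_{u,k}$, with $u\in\{0,\dots,n-1\}$ indexing qubits and $k\in\{0,\dots,d-1\}$ indexing slots. I will choose the blocks from the fixed order $\prec$ itself, so that the greedy coloring is forced regardless of tie breaking. Let $y_0\prec y_1\prec\dots\prec y_{dn-1}$ be the variables in order, and declare block $u$ to be $\{y_{ud},\dots,y_{ud+d-1}\}$. Its position-$k$ element is $x_{u,k}$, which step (G3) assigns to the $k$-th fixed Pauli slot. No edge will join two variables of the same block.

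\textbf{Edges.} There are two types.
\begin{enumerate}[label=(\alph*),itemsep=1pt]
\item \emph{Matching edges.} For every pair of blocks $u<v$, add a perfect matching between them. This gives $d\binom n2$ edges and contributes degree $n-1$ to every vertex.
\item \emph{Twist edges.} Take the cyclic sequence of blocks $u\to u+1 \pmod n$.
 \begin{itemize}[itemsep=1pt]
 \item For $d=2$, add $x_{u,0}x_{u+1,1}$ for each $u$.
 \item For $d=3$, add $x_{u,k}x_{u+1,k+1 \bmod 3}$ for each $u$ and each $k$.
 \end{itemize}
 This contributes $\binom d2 n$ edges and exactly $d-1$ further edges per vertex.
\end{enumerate}
The matchings are chosen so that after the local frames $\gamma$ they are slot-aligned. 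The edge from $x_{u,P}$ goes to $x_{v,Q}$ with $Q=\gamma_v\gamma_u^{-1}(P)$. Conjugating by $U_\gamma$ then turns every matching coupling $P_uQ_v$ into an aligned $P_uP_v$. For $d=3$ the permutations are cyclic, so the Clifford $C$ of Eq.~\eqref{eq:cyclic-clifford} does this without sign flips.

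Since $n\geq3$, twist edges never duplicate each other. I still need to check that they do not duplicate matching edges: this amounts to picking, on consecutive blocks, a slot offset different from the twist offset, which is always possible after fixing the twist pattern relative to $\gamma$. I expect this to be the only fiddly bookkeeping step. If it fails, I would shift the twist offset by a constant, since there are $d-1\geq1$ alternative offsets.

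With both edge types in place, the graph is $(n+d-2)$-regular and the edge count is Eq.~\eqref{eq:edge-count}. It is connected because the matchings join all blocks. It is non-bipartite because, for $n\geq3$, the matchings among any three blocks contain a triangle or an odd cycle; I will verify this directly by following matched partners around blocks $0,1,2$.

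\textbf{Compiler behaviour.} This is the key step. Regularity makes every degree tie, so the stable sort processes vertices exactly in $\prec$ order, which is block order. I argue by induction on $u$ that every vertex of block $u$ receives color $u$.
\begin{itemize}[itemsep=1pt]
\item When $y\in$ block $u$ is processed, its only colored neighbors lie in blocks $0,\dots,u-1$. These carry colors $0,\dots,u-1$ by induction.
\item Via the matchings, $y$ has a neighbor of every such color.
\item $y$ has no neighbors inside its own block.
\end{itemize}
Hence greedy assigns color $u$. Each color class is therefore exactly one block. Step (G2) lists it in $\prec$ order, and step (G3) packs it as a single full block with slot $k$ going to the $k$-th axis. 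This yields exactly $n$ qubits.

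\textbf{Weights.} Each matching edge $(x_{u,P},x_{v,Q})$ whose aligned image is an edge term $c_Pw_{uv}P_uP_v$ of $H_{\mathrm{src}}$ receives that weight, after absorbing the $d/2$ factor as in Proposition~\ref{prop:exact-lift}. Every remaining edge (twist edges and matching edges absent from $H_{\mathrm{src}}$) receives a common rational weight $\varepsilon>0$. By Proposition~\ref{prop:positive-cone} the compiled Hamiltonian equals $U_\gamma H_{\mathrm{src}}U_\gamma^\dagger$ plus a known identity shift $\alpha I$ plus $V_0=-\tfrac d2\varepsilon\sum P_uQ_v$ over the filler edges. Hence
\[
\lVert V_0\rVert\leq \tfrac d2\,\varepsilon\, |E|.
\]
Choosing $\varepsilon=\Delta/(4d|E|)$ gives $\lVert V_0\rVert\leq\Delta/8$, and $\varepsilon$ is inverse-polynomial with polynomially many bits. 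All steps are polynomial time, which establishes Eq.~\eqref{eq:controlled-completion}.
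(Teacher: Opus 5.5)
Your plan is the paper's construction in different coordinates. The $\gamma$-consistent matchings with $Q=\gamma_v\gamma_u^{-1}(P)$ compose into exactly the paper's $d$ vertex-disjoint cliques $V_a$, one per axis part with one vertex per block. Your greedy induction, the placement of source weights on same-part edges, and the $\varepsilon$-filler norm bound are the same arguments as in the paper. Two steps are not right as written, and one change fixes both.

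\emph{Connectivity.} Because the matching edges form $d$ disjoint cliques, they do not make the graph connected. Connectivity must come from the twist edges, and only if those join different parts.

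\emph{Twist collisions.} Between blocks $u$ and $u+1$, the same-part pairs are precisely the matching edges. So a twist defined in slot coordinates coincides with a matching edge whenever its offset equals $\gamma_{u+1}\gamma_u^{-1}$.
\begin{itemize}
\item For $d=2$ with $\gamma_{u+1}\gamma_u^{-1}$ the swap, $x_{u,0}x_{u+1,1}$ is already a matching edge.
\item For $d=3$ with $\gamma_u$ the shift by $u$ and $3\mid n$, every twist collides. The graph then degenerates to $d$ disjoint copies of $K_n$, which is disconnected and $(n-1)$-regular.
\end{itemize}
Your fallback of changing the offset works for $d=3$ if done separately on each consecutive pair. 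For $d=2$ it breaks regularity. Replacing $x_{u,0}x_{u+1,1}$ by $x_{u,0}x_{u+1,0}$ gives $x_{u+1,0}$ two twist edges and $x_{u+1,1}$ none.

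\emph{The fix.} Define the twists between parts, as the paper does: use the edges $(v_{a,r},v_{b,r+1 \bmod n})$ for $a<b$, where $v_{a,r}$ is the variable in slot $\gamma_r(a)$ of block $r$. These edges have three properties:
\begin{itemize}
\item they are cross-part, so they never duplicate a same-part matching edge;
\item each vertex receives exactly $d-1$ of them, so the $(n+d-2)$-regularity and the edge count go through;
\item they join the parts, which gives connectivity.
\end{itemize}
The rest of your proof, including the triangle inside a part and the greedy argument, then goes through unchanged.
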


\begin{proof}
Write the fixed order as
$x_0\prec x_1\prec\cdots\prec x_{dn-1}$. For each axis label
$a\in\{0,\ldots,d-1\}$, define
\begin{equation}
 V_a=\{x_{dr+\gamma_r(a)}:0\leq r<n\}
\end{equation}
and write $v_{a,r}=x_{dr+\gamma_r(a)}$. Make every $V_a$ a clique. For each
pair $a<b$, add the shifted perfect matching
\begin{equation}
 \{(v_{a,r},v_{b,r+1\bmod n}):0\leq r<n\}.
 \label{eq:shifted-matching}
\end{equation}
Each vertex has $n-1$ neighbors in its clique and one neighbor in every other
part. The support is therefore $(n+d-2)$-regular and has the edge count in
Eq.~\eqref{eq:edge-count}. The shifted matchings connect the parts, and every
part contains a triangle because $n\geq3$.

Let $R_r=\{v_{a,r}:0\leq a<d\}$ denote row $r$. Clique edges remain
inside one part, while every cross-part edge changes the row index by one
modulo $n$. Hence each $R_r$ is independent; the assumption $n\geq3$ ensures
that even the wraparound edges between rows $0$ and $n-1$ never stay within a
row.
Equal degree makes (G1) process vertices in the fixed order. We claim
that greedy coloring assigns color $r$ to every $v_{a,r}$. Inductively, the
earlier vertices in the clique $V_a$ already carry colors
$0,\ldots,r-1$, so all smaller colors are forbidden. 
Every cross-part neighbor has row index $r+1$ or $r-1$ modulo $n$, never $r$.
For $r=0$, the $r-1$ neighbor wraps to row $n-1$ and is still
unprocessed. For $r=n-1$, the $r+1$ neighbor wraps to row $0$ and already has
color $0$. In every other case, an already processed cross-part neighbor has
its different row color by induction.
If
not, it cannot affect the current choice. Thus color~$r$ is available and
is the greedy choice. Color class $r$ is consequently the ordered row
$(x_{dr},\ldots,x_{dr+d-1})$. It is independent, fills one block, and assigns
part $a$ to Pauli position $\gamma_r(a)$.

Now identify row $r$ with source qubit $r$. For each source edge $(u,v)$ and each
axis part $a$, put its positive source weight on the same-part edge
$(v_{a,u},v_{a,v})$. These edges realize
$U_\gamma H_{\mathrm{src}}U_\gamma^\dagger$.
Give every remaining edge a sufficiently small common
positive weight. As detailed in Appendix~\ref{app:compiler-completion}, its
total contribution has the form $\alpha I+V_0$, with
$\lVert V_0\rVert\leq\Delta/8$, while all weights retain polynomial bit
complexity. This completes the proof.
\end{proof}

The construction of the theorem is adapted to the fixed order of the compiler. It does not rely on
finding a favorable relabeling of a given instance. It is also robust to small
independent changes of positive edge weights. 
Indeed, 
each edge contributes
$w_e(I-dP_eQ_e)/2$ to the QRAO Hamiltonian, and so changing its weight by
$\xi_e$ contributes
$\xi_e(I-dP_eQ_e)/2$ to $\Delta H$. Since $P_eQ_e$ has eigenvalues
$\pm1$, we have $\lVert I-dP_eQ_e\rVert=d+1$, and the triangle inequality
then gives
\[
\lVert\Delta H\rVert
\leq
\frac{d+1}{2}\sum_e|\xi_e|.
\]
Reserving an additional $\Delta/8$ of slack at each output threshold
leaves a promise gap of at least $\Delta/2$ and preserves the answer whenever
$\sum_e|\xi_e|\leq\Delta/(4(d+1))$, provided all edge weights remain positive.
Within the space of positive edge weights on this fixed support, every constructed promise Hamiltonian instance
has an inverse-polynomial neighborhood in $\ell_1$
distance with the same answer and packing.

We now show that the edge density of the classical support graph is unavoidable under the exact color-class requirement.

\begin{proposition}[Density under exact color class sizes]
\label{prop:density}
Suppose (G1) is applied to $dn$ variables and every resulting color class has
size exactly $d$. Then the classical support graph has
\begin{equation}
   |E|\geq dn(n-1)/2=\Omega(dn^2).
\end{equation}
For $n\geq3$, the ratio between the edge count in Theorem~\ref{thm:connected} and this lower
bound is $(n+d-2)/(n-1)$, which tends to $1$ as $n$ grows with fixed $d$.
\end{proposition}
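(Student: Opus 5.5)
The lower bound comes from a standard property of greedy coloring, applied class by class. Let the color classes produced by (G1) be $C_0,C_1,\ldots,C_{k-1}$. Since there are $dn$ variables and every class has exactly $d$ of them, $k=n$. The key observation is that a vertex $v$ receiving color $c$ under greedy coloring must, at the moment it is processed, have an already-colored neighbor of each color $0,\ldots,c-1$. Otherwise a smaller color would be available and would be chosen. So every vertex in $C_c$ has at least $c$ distinct neighbors, one in each earlier class $C_{c'}$ with $c'<c$.

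Next I count edges between classes. Fix a pair $c'<c$. Each of the $d$ vertices of $C_c$ has at least one neighbor in $C_{c'}$, and these incidences are distinct edges because their endpoints in $C_c$ differ. Hence there are at least $d$ edges between $C_{c'}$ and $C_c$. Edges between different pairs of classes are disjoint. Summing over all $\binom n2$ pairs gives
\[
|E|\ \geq\ d\binom n2=\frac{dn(n-1)}{2}=\Omega(dn^2).
\]
This argument uses only that each class is nonempty with exactly $d$ elements. It does not depend on the processing order or on tie breaking, so it applies to any run of (G1).

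For the ratio claim, I take the edge count from Eq.~\eqref{eq:edge-count}, $d\binom n2+\binom d2 n$, and divide it by $d\binom n2$. This gives
\[
1+\frac{(d-1)n}{d}\cdot\frac{2}{n(n-1)}\cdot\frac d2\ =\ 1+\frac{d-1}{n-1}\ =\ \frac{n+d-2}{n-1}.
\]
A quicker way to see the same value: the construction is $(n+d-2)$-regular on $dn$ vertices, so it has $dn(n+d-2)/2$ edges, and dividing by $dn(n-1)/2$ gives the ratio directly. For fixed $d$ this tends to $1$ as $n\to\infty$. The excess over the bound is $\binom d2 n=O(n)$ edges, matching the claim in the introduction.

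There is no real obstacle in this argument. The only point needing care is the greedy-coloring property itself: all smaller colors must be blocked by neighbors that have already been colored. This follows directly from the rule "smallest available nonnegative color" in (G1).
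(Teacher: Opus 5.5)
Your proposal is correct and is essentially the paper's argument. The paper orients each edge toward its later endpoint and sums in-degrees, using in-degree at least $k$ for a color-$k$ vertex to get $d\sum_{k=0}^{n-1}k$; this is the same count you organize by pairs of color classes, and your ratio computation $(n+d-2)/(n-1)$ is also correct.
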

\begin{proof}
Orient each edge from its earlier to its later endpoint in the greedy order.
A vertex assigned color $k$ has earlier neighbors of every color
$0,\ldots,k-1$, and so has in-degree at least~$k$. 
Since there are $d$
vertices in each color class 
and the greedy coloring uses consecutive colors $0,\dots,n-1$, counting every edge once gives 
\[
 |E|=\sum_v\deg^-(v)
 \geq d\sum_{k=0}^{n-1}k
 =\frac{dn(n-1)}2.
\]
The bound is attained by $d$ disjoint copies of $K_n$, ordered in $n$ rows
with one vertex from each copy in each row.
\end{proof}

Proposition~\ref{prop:density} assumes that each color class has exactly
$d$ vertices, as in Theorem~\ref{thm:connected}. This is stronger than full
occupancy, since (G3) splits a class of size $kd$ into $k$ full blocks.
An edgeless graph on $dn$ vertices has just one color class but still fills
all $n$ qubits. Thus full occupancy alone does not imply quadratic
support-graph density.

\begin{corollary}[Compiler-level complexity amplification]
\label{cor:main}
For every $d\in\{2,3\}$ and every fixed stable degree-ordered
greedy compiler,
the classical positive-weight MaxCut threshold decision
problem is $\NP$-complete on the input instances in $\mathcal C_d$,
while the unrestricted compiled QRAO energy promise problem is
$\QMA$-complete on $\mathcal C_d$ with an inverse-polynomial gap.
\end{corollary}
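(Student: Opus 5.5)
The corollary has two independent claims on the class $\mathcal C_d$: classical $\NP$-completeness and quantum $\QMA$-completeness. I would handle each by feeding a known hard source family through Theorem~\ref{thm:connected}. In both cases membership is routine. MaxCut threshold is in $\NP$ with the cut as witness. The compiled Hamiltonian is an explicitly specified $2$-local Hamiltonian with polynomially bounded rational coefficients; its edge weights, including the small filler weights of Appendix~\ref{app:compiler-completion}, have polynomial bit complexity. So its maximum-energy promise problem lies in $\QMA$, as in the proof of Theorem~\ref{thm:axis-classification}.

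\textbf{$\QMA$-hardness.} Take the $\QMA$-complete source family of Theorem~\ref{thm:axis-classification}(ii) with $r=d$ active axes. These are positive aligned $d$-axis QRAO Hamiltonians $H_{\mathrm{src}}$ on unrestricted graphs, with thresholds $\alpha_0<\beta_0$ and gap $\Delta$. I would first pad to $n\geq3$ qubits if necessary, using isolated qubits, which change nothing. I would then choose the trivial permutations $\gamma_r=\mathrm{id}$, so $U_\gamma=I$, and apply Theorem~\ref{thm:connected}. This gives in polynomial time an instance in $\mathcal C_d$ with
\[
H_{\mathrm{out}}=H_{\mathrm{src}}+\alpha I+V_0,\qquad \lVert V_0\rVert\leq\Delta/8 .
\]
By Weyl's inequality, $\lambda_{\max}$ shifts by $\alpha$ up to error $\Delta/8$. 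The new thresholds are therefore $\alpha_0+\alpha+\Delta/8$ and $\beta_0+\alpha-\Delta/8$, and the gap is at least $3\Delta/4$, which is still inverse-polynomial. The only point to check is that the source family supplies \emph{every} instance I need. It does: Theorem~\ref{thm:connected} accepts an arbitrary positive aligned $d$-axis Hamiltonian, and the compiler packing is exactly full, so the compiled Hamiltonian really is this one.

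\textbf{$\NP$-hardness on the same class.} This is the step I expect to be the main obstacle. The instances in $\mathcal C_d$ are forced to have the rigid support of Theorem~\ref{thm:connected}: $d$ cliques of size $n$ plus shifted matchings. An arbitrary MaxCut instance therefore does not literally lie in $\mathcal C_d$. My plan is to reduce weighted MaxCut on a graph $G$ with $n\geq3$ vertices to the classical objective of a $\mathcal C_d$ instance. Put $G$'s weights on the $V_0$-part edges (part $a=0$). Every remaining clique and matching edge receives a tiny common weight $\epsilon$. The classical objective then decouples up to the $\epsilon$ terms, so its optimum lies in
\[
\bigl[\operatorname{MaxCut}(G),\ \operatorname{MaxCut}(G)+\epsilon M\bigr],
\]
where $M$ is the number of filler edges.

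Because MaxCut weights in the source may be taken integral, choosing $\epsilon M<1/2$ lets the threshold $T$ be transferred unchanged, and the answers coincide. This choice uses only the edge support and the weight assignment of the construction, not the thresholds, which matches the definition of $\mathcal C_d$ as a structural class with unrestricted thresholds. Alternatively, one could use the codeword embedding of Eq.~\eqref{eq:qrao-codeword-edge} and the three-axis lift remark (optimal codeword value $d\cdot\operatorname{MaxCut}(G)$), putting $G$ on all $d$ parts. Either version yields $\NP$-hardness of the classical threshold problem on $\mathcal C_d$. Combining the two reductions proves the corollary.
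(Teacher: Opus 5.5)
Your proposal is correct and follows essentially the same route as the paper. Membership is immediate. $\QMA$-hardness comes from feeding the $r=d$ aligned source family of Theorem~\ref{thm:axis-classification} through Theorem~\ref{thm:connected} and absorbing the filler term via Weyl's inequality; the paper uses equal axis coefficients, which is an immaterial specialization. $\NP$-hardness comes from embedding MaxCut with total filler weight too small to close the integer gap.

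Your ``alternative'' of placing $d$ identical copies of $G$ on all $d$ parts, with heavy-edge optimum $d\cdot\operatorname{MaxCut}(G)$, is exactly the paper's classical reduction. It is also the cleaner choice: it is literally an output of Theorem~\ref{thm:connected} applied to a positive aligned $d$-axis source. Your single-part version instead relies on reading $\mathcal C_d$ as a purely structural (support-and-packing) class, which the paper's wording does support.
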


\begin{proof}
Membership in $\NP$ for the classical problem and $\QMA$ for the quantum problem
is immediate. For classical hardness, start from unweighted MaxCut and place
$d$ identical copies of every source edge on the $d$ clique parts. Choose the
total filler weight below one quarter of the unit classical promise gap. The
heavy-edge optimum is $d\operatorname{MaxCut}(G)$, and the filler cannot close
the gap. Theorem~\ref{thm:connected} supplies the required classical support graph and packing. 
For quantum hardness, start from either two-axis or three-axis instances with equal axis coefficients
in the $\QMA$-complete region of Theorem~\ref{thm:axis-classification}.
Equation~\eqref{eq:controlled-completion} preserves the inverse-polynomial
promise gap. The two reductions use the same class
$\mathcal C_d$ though not necessarily the same individual instances.
\end{proof}

Appendix~\ref{app:twists} constructs a $\QMA$-hard three-axis family
whose non-filler source terms become mixed-axis under balanced local Pauli
frame choices (Corollary~\ref{cor:mixed-axis-source}).

\section{Relaxation optima and complexity across state domains}

\label{sec:states}

QRAO optimization can also be applied and studied over restricted classes of quantum states rather than the full compressed Hilbert space. 
Here we consider the nested state domains 
\begin{equation}
 \text{QRAC codewords}\subset\text{pure product states}
 \subset\text{separable states}\subset\text{all quantum states}.
 \label{eq:domains}
\end{equation}
Separable states are convex mixtures of product states.
For a nonempty compact family $\mathcal S$ of density operators, write
\begin{equation}
 \OPT_{\mathcal S}(H)=\max_{\rho\in\mathcal S}\operatorname{Tr}(H\rho).
\end{equation}
QRAC codewords reproduce the original MaxCut objective by
Eq.~\eqref{eq:qrao-codeword-edge}. In the isotropic three-axis case,
Eqs.~\eqref{eq:qrao-codeword-edge}  and~\eqref{eq:qrao-qmc} together give
\begin{equation}
 \begin{aligned}
 \OPT_{\mathrm{all}}(H_{\QRAO})&=3\cdot\OPT_{\QMC}(G),\\
 \OPT_{\mathrm{code}}(H_{\QRAO})&=3\cdot \operatorname{MaxCut}(G).
 \end{aligned}
 \label{eq:state-values}
\end{equation}
Consequently,
\begin{equation}
 \frac{\OPT_{\mathrm{all}}(H_{\QRAO})}
 {\OPT_{\mathrm{code}}(H_{\QRAO})}
 =\frac{\OPT_{\QMC}(G)}{\operatorname{MaxCut}(G)}.
 \label{eq:relaxation-gap}
\end{equation}
This ratio is the QRAO relaxation gap for the given instance. It plays the role
of an integrality gap, with QRAC codewords representing the discrete
feasible solutions and all quantum states forming the relaxed domain.
Equation~\eqref{eq:relaxation-gap} shows that it is equal to the
Quantum MaxCut to Classical MaxCut relaxation gap.

We now show this gap can be strict even when $G$ is a simple matching (for which every classical
edge can be cut).

\begin{proposition}[Exact matching gap]
\label{prop:matching}
Let $G$ be a matching of total edge weight $W$, and use the full
$d$-axis lift with equal axis coefficients for $d\in\{1,2,3\}$. Then
\begin{equation}
 \OPT_{\mathrm{code}}=\OPT_{\mathrm{prod}}=\OPT_{\mathrm{sep}}=dW,
 \qquad
 \OPT_{\mathrm{all}}=\frac{d(d+1)}2W.
 \label{eq:matching-gap}
\end{equation}
Thus unrestricted states improve on the optimal value by factors $1$, $3/2$, and $2$
for one, two, and three axes, respectively.
\end{proposition}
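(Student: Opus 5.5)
The Hamiltonian splits over the matching edges, so I will compute everything edge by edge and then sum. By Proposition~\ref{prop:exact-lift} with $c_P=1$ on the $d$ active axes, each matching edge $(u,v)$ contributes
\[
\frac{w_{uv}}{2}\Bigl(d\,I-d\sum_{P\in A_d}P_uP_v\Bigr).
\]
Distinct matching edges act on disjoint qubit pairs. Hence the unrestricted optimum is the sum of the per-edge maximum eigenvalues, and the same sum is attained by a tensor product of per-edge optimal states. For the product and separable domains, the optimum of a linear functional over a convex hull is attained at an extreme point. So $\OPT_{\mathrm{sep}}=\OPT_{\mathrm{prod}}$, and a pure product state again decouples edge by edge. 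The whole problem therefore reduces to a single two-qubit term with $w=1$.

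For the unrestricted value I need the minimum eigenvalue of $S_d=\sum_{P\in A_d}P\otimes P$.
\begin{itemize}
\item For $d=1$, $S_1=ZZ$ has minimum $-1$.
\item For $d=2$, $XX+ZZ$ acts on the Bell basis with eigenvalues $\pm2$ and $0,0$, so the minimum is $-2$.
\item For $d=3$, $XX+YY+ZZ=2\,\mathrm{SWAP}-I$ has minimum $-3$ on the singlet.
\end{itemize}
In every case $\lambda_{\min}(S_d)=-d$. The edge maximum is then $\tfrac{1}{2}(d+d\cdot d)=d(d+1)/2$. Summing over edges gives $\OPT_{\mathrm{all}}=d(d+1)W/2$.

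For product states I write Bloch vectors $r_u,r_v$ in the unit ball. Then $\langle S_d\rangle=\sum_{P\in A_d} r_{u,P}r_{v,P}\ge -|r_u||r_v|\ge-1$ by Cauchy--Schwarz. The edge value is therefore at most $\tfrac{1}{2}(d+d)=d$, which gives $\OPT_{\mathrm{prod}}\le dW$.

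For the matching lower bound, each classical edge of the lifted graph can be cut. Set $x_{v,P}=-x_{u,P}$ on every axis. By Eq.~\eqref{eq:qrao-codeword-edge} the codeword value is $\sum_{P}w_{uv}=d\,w_{uv}$ per edge. Equivalently, the codeword Bloch vectors are antipodal unit vectors $\pm(1,\dots,1)/\sqrt d$, so they saturate Cauchy--Schwarz. This gives $\OPT_{\mathrm{code}}\ge dW$.

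The chain of inclusions in Eq.~\eqref{eq:domains} then forces
\[
\OPT_{\mathrm{code}}=\OPT_{\mathrm{prod}}=\OPT_{\mathrm{sep}}=dW.
\]
The stated ratios follow as $(d+1)/2$, which equals $1$, $3/2$ and $2$ for $d=1,2,3$.

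The only step needing care is the product-state upper bound. It depends on the Cauchy--Schwarz argument on Bloch vectors, with coordinates restricted to the active axes, which is valid because dropping coordinates can only shrink the norms. It also depends on the extreme-point reduction from separable to pure product states, which I must state explicitly. The eigenvalue computations for $d=2,3$ are routine Bell-basis checks.
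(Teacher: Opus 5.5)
Your proof is correct and follows essentially the same route as the paper. You cut every lifted matching edge for the codeword lower bound, bound the product-state energy by Cauchy--Schwarz on Bloch vectors restricted to the active axes, reduce separable to product by linearity, and decouple the unrestricted problem over disjoint qubit pairs. The one small difference is that you compute $\lambda_{\min}\bigl(\sum_{P\in A_d}P\otimes P\bigr)=-d$ explicitly, which gives both the upper bound and achievability, whereas the paper exhibits a Bell state that is a $-1$ eigenvector of every active same-axis product and leaves the norm upper bound implicit.
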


\begin{proof}
Every matching edge in every classical copy can be cut, so codewords attain
$dW$. For a product state, let $r_u^{(A_d)}$ and $r_v^{(A_d)}$ be the endpoint
Bloch vectors projected onto the active axes. Their norms are at most one, so
$r_u^{(A_d)}\mathbin{\cdot}r_v^{(A_d)}\geq-1$, with equality for antipodal
unit vectors supported on those axes. Product states therefore attain no more
than $dW$, and convex mixtures cannot improve this value. For unrestricted
states, distinct matching edges act on disjoint qubit pairs. A Bell state on
each pair has eigenvalue $-1$ for every active same-axis Pauli product, giving
$d(d+1)w/2$ per edge.
\end{proof}

The matching result separates relaxation strength from source hardness. It 
does not imply a better decoded cut. Indeed, every matching edge was already cut by a
codeword.
For $d=3$, the same example limits the worst-case product-state approximation ratio for Quantum MaxCut to $1/2$ of the unrestricted optimum. Parekh and Thompson achieve this guarantee on arbitrary positive-weight graphs~\cite{ParekhThompson2022Optimal}.

For a pure product state, let $r_u\in\mathbb R^3$ be the unit Bloch vector of
qubit $u$ (i.e., the vector of its Pauli expectation values for $X$, $Y$, and $Z$). On the isotropic three-axis slice, the Hamiltonian $K_A(G)$ defined
in Eq.~\eqref{eq:KA} has expectation
\begin{equation}
 \langle K_A(G)\rangle=
 \sum_{(u,v)\in E}w_{uv}\,r_u\mathbin{\cdot}r_v.
 \label{eq:product-energy}
\end{equation}
Kallaugher et al. prove that the corresponding product-state Quantum MaxCut
problem is $\NP$-complete even for positive unit-weight graphs
~\cite[Corollary~6]{KallaugherEtAl2025}. A polynomial-precision list of Bloch
vectors is an $\NP$ witness. Choosing the coordinate tolerance below a fixed
fraction of the promise gap divided by the coefficient $\ell_1$ norm controls
the energy error. A linear objective has the same maximum over convex mixtures
of product states as over pure product states, so
$\OPT_{\mathrm{sep}}=\OPT_{\mathrm{prod}}$ and the separable problem is also
$\NP$-complete.

\subsection{Stability and restricted state space complexity} 
\label{sec:stability}
The following stability bound shows how to control small perturbations, such as those used to enforce connected compiler output instances in Section~\ref{sec:compiler}. 
\begin{lemma}[Stability under perturbations] \label{lem:stability}
For every nonempty compact family $\mathcal S$ of density operators, and
Hermitian operators $H$ and $V$,
\begin{equation}
  \left|
\OPT_{\mathcal S}(H+V)-\OPT_{\mathcal S}(H)
\right|
\;\leq\; \lVert V\rVert .  
\end{equation}
\end{lemma}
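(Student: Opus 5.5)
The plan is to bound, uniformly over the domain, how much the objective can change, and then pass this pointwise bound through the maximum. For any density operator $\rho$, the linearity of the trace gives $\operatorname{Tr}((H+V)\rho)=\operatorname{Tr}(H\rho)+\operatorname{Tr}(V\rho)$. Since $\rho\succeq0$ and $\operatorname{Tr}\rho=1$, we can write $\rho=\sum_i p_i|\psi_i\rangle\langle\psi_i|$ as a convex combination of pure states. Each term satisfies $|\langle\psi_i|V|\psi_i\rangle|\leq\lVert V\rVert$ by the definition of the spectral norm for Hermitian $V$. It follows that
\[
-\lVert V\rVert\leq\operatorname{Tr}(V\rho)\leq\lVert V\rVert .
\]
Equivalently, this is Hölder's inequality $|\operatorname{Tr}(V\rho)|\leq\lVert V\rVert\,\lVert\rho\rVert_1=\lVert V\rVert$.

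Next we check that both maxima exist. The map $\rho\mapsto\operatorname{Tr}(H\rho)$ is continuous and real-valued because $H$ is Hermitian. Since $\mathcal S$ is nonempty and compact, $\OPT_{\mathcal S}(H)$ and $\OPT_{\mathcal S}(H+V)$ are attained. Let $\rho^\star\in\mathcal S$ be a maximizer for $H+V$. Then
\[
\OPT_{\mathcal S}(H+V)=\operatorname{Tr}(H\rho^\star)+\operatorname{Tr}(V\rho^\star)\leq\OPT_{\mathcal S}(H)+\lVert V\rVert .
\]
The symmetric argument uses a maximizer $\sigma^\star$ for $H$ and writes $H=(H+V)+(-V)$, noting that $\lVert -V\rVert=\lVert V\rVert$. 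This gives $\OPT_{\mathcal S}(H)\leq\OPT_{\mathcal S}(H+V)+\lVert V\rVert$. Combining the two inequalities yields the claim.

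None of these steps is a real obstacle. The only point needing care is the pointwise trace bound, which needs positivity and unit trace of $\rho$, and these hold for every element of $\mathcal S$ by assumption. The argument never uses any structure of $\mathcal S$ beyond its elements being density operators together with compactness. So the bound applies equally to QRAC codewords, pure product states, separable states, and all states. Compactness could even be dropped by working with suprema, but the stated hypothesis makes the maxima attained and keeps the argument as above.
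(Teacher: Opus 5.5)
Your proof is correct and follows the paper's argument: bound $|\operatorname{Tr}(V\rho)|\leq\lVert V\rVert$ pointwise, pass it through the maximum, and obtain the reverse inequality by writing $H=(H+V)-V$. Your extra justifications (that compactness makes both maxima attained, and that the trace bound follows from a pure-state decomposition or H\"older's inequality) only make explicit what the paper leaves implicit.
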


\begin{proof}
For every (normalized) $\rho\in\mathcal S$ we have $|\operatorname{Tr}(V\rho)|
\leq\lVert V\rVert$, which for $H+V$ implies 
$\operatorname{Tr}((H+V)\rho)
\leq
\operatorname{Tr}(H\rho)+\lVert V\rVert$.
Maximizing over $\rho\in\mathcal S$ then gives
\[
\OPT_{\mathcal S}(H+V)
\leq
\OPT_{\mathcal S}(H)+\lVert V\rVert.
\]
Applying the same argument to
$H=(H+V)-V$ gives the reverse inequality.
\end{proof}
We apply the lemma to show QRAO optimization remains hard on restricted state spaces.

\begin{theorem}[State domain complexity classification]
\label{thm:states}
On the class $\mathcal C_3$ of connected, regular, non-bipartite, exactly
three-to-one compressed positive-weight MaxCut instances, the
promise problems for optimization over QRAC codewords, pure
product states, and separable states are $\NP$-complete with inverse-polynomial
gap, becoming $\QMA$-complete over the full Hilbert space.
\end{theorem}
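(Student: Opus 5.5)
The proof splits into membership and hardness for each of the four domains. In every case the hard instances come from Theorem~\ref{thm:connected} with $d=3$ and the identity frame $\gamma_r=\mathrm{id}$. The perturbation $V_0$ is then absorbed using Lemma~\ref{lem:stability}, which applies uniformly to every compact domain $\mathcal S$. Any reduction into $\mathcal C_3$ therefore only needs a source promise problem on the isotropic three-axis slice with gap $\Delta$. The output thresholds are shifted by the known constant $\alpha$ and moved inward by $\Delta/8$, which leaves a gap of at least $3\Delta/4$ in every domain simultaneously.

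\emph{Membership.}
\begin{itemize}
\item QRAC codewords: a sign assignment is an $\NP$ witness, since by Eq.~\eqref{eq:qrao-codeword-edge} the codeword energy is exactly the classical MaxCut objective of the compiled support graph.
\item Pure product states: the witness is a polynomial-precision list of Bloch vectors. Its energy is a multilinear polynomial in the coordinates, so rounding error is bounded by the coefficient $\ell_1$ norm times the coordinate tolerance, as discussed above for Eq.~\eqref{eq:product-energy}.
\item Separable states: $\OPT_{\mathrm{sep}}=\OPT_{\mathrm{prod}}$ by linearity, so the same witness works.
\item All states: $\QMA$ membership holds because $-H_{\QRAO}$ is a polynomially specified 2-local Hamiltonian.
\end{itemize}

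\emph{Hardness.}
\begin{itemize}
\item Codewords: use the classical part of Corollary~\ref{cor:main}. Place three copies of each unweighted MaxCut edge on the three clique parts and keep the filler weight below a quarter of the unit gap. The optimal codeword value is then $3\operatorname{MaxCut}(G)$ up to the filler shift.
\item Unrestricted states: take an isotropic (Quantum MaxCut) source, which is $\QMA$-complete via Eq.~\eqref{eq:qrao-qmc} and Theorem~\ref{thm:axis-classification}, and apply Eq.~\eqref{eq:controlled-completion}.
\item Product and separable states: take the source to be the isotropic three-axis lift of a unit-weight instance from Kallaugher et al.~\cite[Corollary~6]{KallaugherEtAl2025}. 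By Eq.~\eqref{eq:product-energy}, the product-state optimum of $H_{\QRAO}$ is an affine function of the minimum of $\sum w_{uv}\,r_u\cdot r_v$ over unit vectors. This is exactly product-state Quantum MaxCut, whose decision version they show is $\NP$-complete.
\end{itemize}

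\emph{Main obstacle.} The hard step is confirming that the Kallaugher et al.\ source has an inverse-polynomial promise gap. Their result is stated for exact or threshold decisions, and I would first check that for rational unit-weight inputs the gap between yes and no thresholds can be taken inverse-polynomial. If it is only an exact threshold problem, I would argue that the optimum of this vector program over instances of their family is attained at algebraic points. Failing that, I would use their reduction's explicit completeness and soundness values to obtain a polynomially bounded separation. A second check is needed because Theorem~\ref{thm:connected} is stated for source Hamiltonians on $n\ge 3$ qubits, and the filler terms are arbitrary non-aligned Pauli couplings. Lemma~\ref{lem:stability} covers this, since $\|V_0\|\le\Delta/8$ bounds the change in $\OPT_{\mathcal S}$ for every $\mathcal S$, including codewords. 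This avoids any separate classical filler argument, except that the codeword case can alternatively reuse Corollary~\ref{cor:main} directly. Since the reductions may use different source subfamilies, the theorem is stated for the common structural class $\mathcal C_3$ rather than for a single instance family.
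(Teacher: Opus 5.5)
Your proposal is correct and follows essentially the same route as the paper: codewords via the classical part of Corollary~\ref{cor:main}, product and separable states via the Kallaugher et al.\ family passed through Theorem~\ref{thm:connected}, Lemma~\ref{lem:stability}, and linearity, and unrestricted states via the $d=3$ case of Corollary~\ref{cor:main} (your choice $\gamma_r=\mathrm{id}$ is a harmless simplification). The paper simply takes the inverse-polynomial gap of the cited product-state result as given. Of your fallbacks, only the explicit completeness and soundness values of their reduction would help, since algebraicity of the optima by itself gives only exponentially small separations.
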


The theorem does not say that restricting the quantum ansatz makes optimization easy.
Rather, it shows that admitting
unrestricted states can change the relevant completeness classification from
$\NP$ to $\QMA$.
The $\NP$-complete results are derived by considering different Hamiltonian subfamilies inside $\mathcal C_3$ for each restricted state domain.

\begin{proof}[Proof overview]
For QRAC codewords, the claim follows from
Eq.~\eqref{eq:qrao-codeword-edge} and the classical part of
Corollary~\ref{cor:main}. For product states, combine the known three-axis
$\NP$-hard family with Theorem~\ref{thm:connected} and
Lemma~\ref{lem:stability}; linearity gives the same optimum over separable
states. The unrestricted claim is the $d=3$ case of
Corollary~\ref{cor:main}. Full details are given in
Appendix~\ref{app:state-domain-proof}.
\end{proof}

\section{Hardness of Approximation}
\label{sec:approx}

The exact scaling in Eq.~\eqref{eq:qrao-qmc} also transfers multiplicative
value hardness. Piddock proves that there is a constant $\eta<1$ for which an
$\eta$-multiplicative approximation to Quantum MaxCut is $\NP$-hard even on
unweighted graphs of constant bounded degree
~\cite[Theorem~1]{Piddock2025QMC}. 
The result guarantees a constant $\eta<1$, independent of the instance
size, but does not provide an explicit approximation ratio suitable for
quantitative comparison.

For comparison, the results of Hwang et al.~\cite{HwangEtAl2023} assume both the Unique Games Conjecture and a vector-valued Borell conjecture. Heilman~\cite[Theorem~1.7 and Corollary~1.9]{Heilman2026Sharp} proves the needed inequality, giving sharp product-state hardness and an unrestricted Quantum MaxCut bound under Unique Games alone. The theorem below uses Piddock's unconditional result.

\begin{theorem}[Constant-factor approximation hardness]
\label{thm:approx}
There exists a constant $\eta<1$ for which approximating the isotropic three-axis
QRAO optimal energy within factor $\eta$ is $\NP$-hard on the union of three identical, disjoint,
unweighted bounded-degree MaxCut graphs. Every stable
degree-ordered greedy compiler realizes the required packing under the
vertex-major order. There is also
a constant $\eta'<1$ giving $\NP$-hardness on the connected, regular,
positive-weight, fully compressed class $\mathcal C_3$.
\end{theorem}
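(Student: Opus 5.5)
The plan has three parts: transfer Piddock's constant-factor hardness for Quantum MaxCut~\cite[Theorem~1]{Piddock2025QMC} through the exact identity~\eqref{eq:qrao-qmc}, check that stable greedy compilation produces the three-axis lift, and then handle the connected class $\mathcal C_3$ with Theorem~\ref{thm:connected} plus the stability bound of Lemma~\ref{lem:stability}.

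\emph{Exact transfer.} Given an unweighted bounded-degree graph $G$, form the lift $G^{\uparrow3}$, which is three identical disjoint copies of $G$. By Eq.~\eqref{eq:qrao-qmc} we have $\OPT_{\mathrm{all}}(H_{\QRAO}(G^{\uparrow3}))=3\,\OPT_{\QMC}(G)$. Any $\eta$-approximation of the QRAO optimum, divided by three, is therefore an $\eta$-approximation of Quantum MaxCut. Piddock's constant $\eta$ transfers directly. Normalization constants, such as the $1/4$ versus $1/2$ convention, do not affect multiplicative ratios.

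\emph{Realization by the compiler.} Fix the vertex-major order $u_X\prec u_Y\prec u_Z\prec$ (next vertex). All three copies of a vertex $u$ have equal degree $\deg_G(u)$. Because (G1) breaks degree ties by $\prec$, it processes $u_X,u_Y,u_Z$ consecutively, and they receive identical colors, since the copies are isomorphic and disjoint and the greedy state in each copy evolves identically. Each color class is then a union of triples $\{u_X,u_Y,u_Z\}$. When (G2) lists a class in the order $\prec$, the triples appear consecutively, so (G3) packs each triple into one block.

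The Pauli slot order within a block may differ from $X,Y,Z$, but it is a fixed permutation applied uniformly to every block. Since all three copies carry equal weight, the Hamiltonian is isotropic, so $\sum_P P_uP_v$ is invariant under any common relabeling of the axes. The image is therefore still $3H_{\QMC}(G)$.

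\emph{Connected class $\mathcal C_3$.} Apply Theorem~\ref{thm:connected} with $H_{\mathrm{src}}=H_{\QRAO}(G^{\uparrow3})$ (adding $n\ge3$ if needed) and trivial $\gamma$. This gives $H_{\mathrm{out}}=H_{\mathrm{src}}+\alpha I+V_0$, where the filler weight $\epsilon$ is ours to choose. Lemma~\ref{lem:stability} yields $|\OPT(H_{\mathrm{out}})-\OPT(H_{\mathrm{src}})-\alpha|\le\lVert V_0\rVert$.

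The main obstacle is that a multiplicative guarantee on $\OPT(H_{\mathrm{out}})$ must translate back to one on $\OPT(H_{\mathrm{src}})$, while $\alpha$ and $V_0$ scale with the number of filler edges, which is $\Theta(n^2)$.

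I would pick $\epsilon$ so that $\alpha+\lVert V_0\rVert\le\delta\,\OPT(H_{\mathrm{src}})$ for a small constant $\delta$. This is possible because $\OPT(H_{\mathrm{src}})\ge 3\cdot\tfrac12|E(G)|=\Omega(n)$, using the classical half-cut bound on codewords, whereas the filler contribution is at most $O(\epsilon n^2)$. So $\epsilon=\Theta(\delta/n)$ suffices, and it keeps polynomial bit complexity.

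An $\eta'$-approximation $a$ of $\OPT(H_{\mathrm{out}})$ then gives $a-\alpha$ within factor $\eta'(1-\delta)-\delta$ of $\OPT(H_{\mathrm{src}})$ from below, together with an upper bound of $(1+\delta)\OPT(H_{\mathrm{src}})$. Normalizing by $1+\delta$, it suffices to choose $\delta$ and $\eta'<1$ so that $(\eta'(1-\delta)-2\delta)/(1+\delta)\ge\eta$. This holds for $\delta$ small enough and $\eta'$ close to $1$, which yields the claimed constant $\eta'<1$.
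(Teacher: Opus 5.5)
Your proposal is correct and follows essentially the same route as the paper: exact ratio transfer through Eq.~\eqref{eq:qrao-qmc}, the identical-colored-history argument under the vertex-major order, and Theorem~\ref{thm:connected} with filler weight made small relative to the source optimum. The only difference is in the last step. The paper phrases it as a gap reduction, bounding the whole filler contribution (identity part included) by $\delta<(1-\eta)B/4$ against the source YES threshold $B$, which yields the explicit $\eta'=(1+3\eta)/(3+\eta)$. You instead subtract the known shift $\alpha$ and normalize against the half-cut bound $\tfrac32|E(G)|$; set $\varepsilon$ from that bound directly rather than from the claim that it is $\Omega(n)$, which can fail if $G$ has isolated vertices.
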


\begin{proof}[Proof overview]
Equation~\eqref{eq:qrao-qmc} preserves the source
multiplicative ratio on the three-copy lift. Applying
Theorem~\ref{thm:connected} with filler contribution sufficiently small
relative to the positive source threshold gives a fixed ratio
$\eta'<1$ on $\mathcal C_3$. Full details are given in
Appendix~\ref{app:approx-proof}.
\end{proof}

The connected family is weighted and dense, whereas the first family is
unweighted and bounded degree but disconnected before compression. The theorem
concerns approximation of the best relaxation value, not the quality of a
classical string produced by a particular decoder.
The first family also relies on the vertex-major input order to
produce its packing. In contrast, the connected family inherits the
any-fixed-order robustness of Theorem~\ref{thm:connected}, at the cost of
weights and a dense classical support graph.

For the uncompleted isotropic lift, Eq.~\eqref{eq:qrao-qmc} scales every
state's energy, and hence the optimum, by three. A multiplicative guarantee
for a Quantum MaxCut state therefore carries over to its QRAO relaxation value.
In the connected construction, the filler includes an identity shift that can
change these ratios. Appendix~\ref{app:approx-proof} bounds the full filler
contribution separately.

\section{Discussion}
\label{sec:discussion}

This paper classifies hardness transitions of QRAO optimization problems, showing that $\NP$-hard classical problems can be mapped to $\StoqMA$- or $\QMA$-hard optimization problems under QRAO compression. 
We call this change in completeness classification \textit{complexity amplification from compression}. 
QRAO compression can produce noncommuting Hamiltonian terms and hence can move the resulting global
optimization between $\NP$, $\StoqMA$, and $\QMA$ complexity regimes. The exact Hamiltonian
image classification identifies where these changes enter. Bipartite containment shows
that graph topology can place part of the positive interaction cone in both $\BQP$ and $\StoqMA$, and the order-robust compiler
construction then proves that $\QMA$-completeness survives exact compression
on a connected regular classical support graph and is robust to arbitrary fixed tie breaking. The density,
approximation, matching, and restricted state domain results expose complementary limits on the interplay of what compression and entanglement change. 
At the same time, our results are worst case and leave open the important question of when hardness arises in typical cases of QRAO, along with similar concerns for more general PCE compression and algorithms.

Qubit savings alone do not capture the computational consequences of compression, including the complexity of optimizing the resulting relaxation.

\paragraph{Where hardness enters in practice.}
The computationally hard task is global optimization over the chosen state domain, not energy evaluation of a fixed candidate state or decoding. 
$\QMA$- or $\NP$-hardness is therefore a worst-case barrier
to exact optimization of the compressed Hamiltonians, not general evidence for obstructed (or advantageous) quantum algorithms that use them. 
A useful algorithm may
avoid hard exact global optimization through restricted or locally optimized ansatz classes,
quantum or classical approximation algorithms,
or decoders tailored to particular input problem classes, among other techniques.

Complementary work proves that an end-to-end uniform polynomial-time quantum
or hybrid procedure preparing, on every MaxCut input and with inverse-polynomial
operational success, a state with encoded energy gain at least any fixed positive
fraction of the optimal classical gain above the random-cut baseline would imply
$\NP\subseteq\BQP$, in both the ordinary Ising and $d=2,3$ QRAO
encodings~\cite{HadfieldApproxQOpt2026}. In QRAO, the decoded mean gain equals
the encoded energy gain divided by~$d^2$. This includes the classical energy threshold highlighted above
and concerns uniform state preparation rather than our exact-optimization
classification over specified Hamiltonian state domains.

\paragraph{Related algorithmic and resource perspectives.}
Recent work combines QRAO with the quantum alternating operator ansatz
(QAOA)~\cite{Farhi14,Hadfield19}, including non-variational ansatz
constructions~\cite{He2025Nonvariational}, co-design of the Hamiltonian and
decoder~\cite{Suzuki2026Decoder}, and finite-budget QRAO--QAOA ensemble
analysis~\cite{SemreFrankel2026}. These studies address algorithm design and
performance, whereas our classification concerns the Hamiltonian optimization
problem produced by compression. Related complexity results show that exact or
exponentially precise expectation-value evaluation for QAOA with two or more
circuit layers is \#P-hard~\cite{Hadfield2026QAOAExpectation}. Bittel and
Kliesch show that classical parameter optimization can be $\NP$-hard even for
classically tractable underlying systems~\cite{BittelKliesch2021}. Those
results concern fixed-circuit evaluation or optimization within a fixed
ansatz, rather than optimization over specified state domains of a Hamiltonian
generated by compression.

Complementary resource bounds show how compressed encodings can move
cost into expectation-value geometry or readout overhead
~\cite{Hadfield2026NoFreeCompression}. These information-theoretic tradeoffs
are distinct from the complexity amplification classified here. Recent PCE work
also identifies ansatz--observable pairings whose encoded correlations are
efficiently classically computable~\cite{LizzioBoscoEtAl2026}. Algorithmic work
on the three-axis Quantum MaxCut problem studies partially entangled matchings
and QAOA-inspired variational ans\"atze on random regular
graphs~\cite{ApteEtAl2025,MarwahaSheSud2024}. Those results concern achievable
energies, whereas our matching and compiler results give exact state-domain
separations and worst-case complexity. Together, these comparisons reinforce
the need to distinguish the observable representation, admitted state family,
compiler, and optimization problem when assessing a compressed quantum
algorithm.

\paragraph{Implications for QRAO compilers.}
The hard connected compiler instances we construct
are dense and deliberately structured. 
The specified Qiskit version
builds its coloring graph from every nonzero quadratic coefficient, and so 
retains all filler edges in our constructions. A compression pipeline that deletes, thresholds, or contracts
small-weight edges before coloring
or considers alternative approaches altogether
may change the packing and so is
outside the coverage of Theorem~\ref{thm:connected}.

\paragraph{Compressed encodings and algorithms beyond QRAO.}
The recognition of QRAO as a $k=1$ PCE is an identification of the variable
encoding, not necessarily of the full optimization problem and workflow~\cite{SciorilliEtAl2025}. General PCE schemes may use higher locality Pauli strings, nonlinear loss functions, distinct quantum state preparation,
and sign-based or other decoding. The present completeness results do not automatically extend to those settings. Obtaining analogous complexity results for other compressed encodings with our approach would require an image of a hard problem family that preserves the promise gap when optimized over a suitable quantum state domain.

\section*{Acknowledgments}

{\emergencystretch=2pt\relax
The author thanks Filip Maciejewski and Davide Venturelli for helpful
discussions, and Chaithanya Rayudu for pointing out the recent results of Ref.~\cite{RayuduTakahashi2026}. This work was supported by the Air Force Research Laboratory under
Contract No.~FA8750-25-C-B0040.\par}

\section*{Code and data availability}

The code and data package accompanying this work is archived on
Zenodo~\cite{Hadfield2026QRAOCode}.

\section*{Author contributions and AI disclosure}

Stuart Hadfield is the sole author and takes responsibility for the research, manuscript, and accompanying code and data.
OpenAI ChatGPT assisted with literature synthesis, correctness checking of all results, and late-stage editing of the manuscript, in particular with 
explicit validation of all claims regarding Qiskit and NetworkX implementations along with generation of the accompanying code and data package.

\setlength{\bibsep}{0.5ex plus 0.2ex}
\bibliographystyle{plainnat}
\bibliography{bib}

\appendix

\section{Promise-gap and compiler details}
\label{app:promise}

\subsection{Promise-gap transformations}

Recall the setting of Theorem~\ref{thm:stoqma-transverse}. 
The reductions use two elementary forms of gap preservation. First, suppose the
source (input) minimum energy problem for $K$ has thresholds $a<b$ and gap
$\Delta=b-a$. If
$H=\alpha I-\beta K$,
 $\beta>0$,
then the corresponding maximum energy thresholds are
$\alpha-\beta a$ and $\alpha-\beta b$, with gap $\beta\Delta$. This covers the
axis lifts and the identity shifts arising from the MaxCut convention.
Second, if an implemented Hamiltonian is
$\widetilde H=H+\alpha'I+V$ with $\lVert V\rVert\leq\epsilon$, Weyl's
inequality (cf. Sec.~\ref{app:compiler-completion}) changes every extremal eigenvalue by at most $\epsilon$. Shifting
the thresholds by $\alpha'$ and reserving $2\epsilon$ of slack therefore
preserves a gap of at least $\beta\Delta-2\epsilon$. The transverse-field
rationalization and connected positive completion choose
$\epsilon\leq\beta\Delta/8$.
Finally, multiplicative approximation requires separate care because an identity shift
does not preserve value ratios. Theorem~\ref{thm:approx} therefore bounds the
norm of the complete filler contribution, including its identity part, relative
to the positive source threshold. No such extra step is needed for the exact
promise-problem classifications.

\subsection{Completion details for Theorem~\ref{thm:connected}}
\label{app:compiler-completion}

We complete the promise-gap and bit-complexity argument
omitted from the main-text proof.
Give every remaining clique or
matching edge a common positive weight $\varepsilon$, and let $M$ be the
number of these filler edges. Their Hamiltonian contribution is
\begin{equation}
 V=\frac{M\varepsilon}{2}I+V_0,
 \qquad
 V_0=-\frac{d\varepsilon}{2}
 \sum_{e\in E_{\mathrm{fill}}}P_eQ_e.
\end{equation}
Since every Pauli product has norm one, we have 
 $\lVert V_0\rVert\leq\frac d2M\varepsilon.$
Choosing $\varepsilon\leq\Delta/(4dM)$ then gives
$\lVert V_0\rVert\leq\Delta/8$. The identity coefficient in $V$ 
gives the shift
$\alpha=M\varepsilon/2$. Applying Weyl's inequality
~\cite[Corollary~III.2.6]{Bhatia1997MatrixAnalysis} bounds
eigenvalue shifts by the spectral norm of a perturbation \[
\left|
\lambda_{\max}(H_{\mathrm{out}}-\alpha I)
-\lambda_{\max}(H_{\mathrm{src}})
\right|
\leq \lVert V_0\rVert
\leq \Delta/8.
\]
After shifting the thresholds by $\alpha$, the YES and NO thresholds
can each move toward one another by at most $\Delta/8$. The remaining
promise gap is therefore at least
$\Delta-2(\Delta/8)=3\Delta/4$. Because $M=O(n^2)$ and $\Delta$ is
inverse polynomial, $\varepsilon$ and all resulting edge weights have
polynomial bit complexity. Altogether this proves the theorem.

\section{Cyclic-twist balance and local alignability}
\label{app:twists}

\begingroup

For the isotropic three-axis interaction, cyclic Pauli-frame choices admit an
exact graph description. Let $G=(V,E,w)$ be the interaction graph on the
compressed qubits, choose one orientation of each edge, and denote the
oriented edge set by $\vec E$. Set $P_0=X$, $P_1=Y$, and $P_2=Z$, with axis
indices taken modulo three. The three cyclic permutations are
$C_3=\{\mathrm{id},(XYZ),(XZY)\}$. Assign each $(u,v)\in\vec E$ a \emph{twist}
\[
t_{uv}\in\mathbb Z_3,\qquad t_{vu}=-t_{uv}.
\]
The twist $t_{uv}$ is the relative cyclic offset between the Pauli labels at
the two endpoints. Thus, $t_{uv}=0$ pairs like axes, while $t_{uv}=1$ and $2$
give, respectively,
\[
X_uY_v+Y_uZ_v+Z_uX_v
\quad\text{and}\quad
X_uZ_v+Y_uX_v+Z_uY_v.
\]
The collection $t=\{t_{uv}\}$ is the \emph{twist field}, with Hamiltonian
\begin{equation}
K(t)=\sum_{(u,v)\in\vec E}w_{uv}
\sum_{a\in\mathbb Z_3}P_{a,u}P_{a+t_{uv},v}.
\label{eq:twisted-isotropic}
\end{equation}
The convention $t_{vu}=-t_{uv}$ makes $K(t)$ independent of the chosen edge
orientations.

\begin{proposition}[Cycle criterion for cyclic twists]
\label{prop:twists}
The Hamiltonian $K(t)$ is locally unitarily equivalent to the aligned
interaction $K(0)$ if and only if the oriented twist sum vanishes modulo three
around every cycle. When this holds, the aligning unitaries can be chosen as
powers of
\begin{equation}
C=\frac12(I-iX-iY-iZ),\qquad
CXC^\dagger=Y,\quad
CYC^\dagger=Z,\quad
CZC^\dagger=X.
\label{eq:cyclic-clifford}
\end{equation}
In $O(|V|+|E|)$ time, one can return either the aligning Clifford frames or a
cycle with nonzero twist sum.
\end{proposition}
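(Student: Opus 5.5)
The plan is to prove the three parts in order: sufficiency with an explicit construction, necessity via a local-unitary invariant, and then the algorithm. For \emph{sufficiency}, suppose every cycle has zero oriented twist sum modulo three. Then $t$ is a coboundary, meaning there exist frames $k:V\to\mathbb Z_3$ with $t_{uv}=k_v-k_u$ on every oriented edge. We construct them by fixing a root $s$ in each connected component, setting $k_s=0$, and propagating $k_v=k_u+t_{uv}$ along a spanning tree. Each non-tree edge closes a fundamental cycle, so the cycle condition makes it consistent. Because $C$ cyclically permutes $X\to Y\to Z$ with no signs, we have $C^{m}P_aC^{-m}=P_{a+m}$. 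Choosing $U=\bigotimes_u C_u^{-k_u}$ (up to an overall sign convention fixed by direct check) maps each term $P_{a,u}P_{a+t_{uv},v}$ to $P_{a-k_u,u}P_{a+t_{uv}-k_v,v}=P_{a',u}P_{a',v}$, where $a'=a-k_u$. Summing over $a\in\mathbb Z_3$ reindexes the sum, so $UK(t)U^\dagger=K(0)$ edge by edge.

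For the \emph{algorithm}, run BFS/DFS in $O(|V|+|E|)$ time to assign the $k_u$, then check every non-tree edge for $k_v-k_u=t_{uv}$. If some edge fails, return its fundamental cycle, read off from the tree paths to the lowest common ancestor (also linear time). That cycle has a nonzero twist sum, since the tree edges contribute exactly $k_v-k_u$.

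For \emph{necessity}, we need a quantity invariant under arbitrary local unitaries $\bigotimes_u U_u$. Each local unitary acts on the Pauli vector of qubit $u$ by a rotation $R_u\in SO(3)$. An edge term $\sum_{a,b}M^{uv}_{ab}P_{a,u}P_{b,v}$ transforms as $M^{uv}\mapsto R_u M^{uv} R_v^{T}$. Here $M^{uv}=w_{uv}\Pi^{t_{uv}}$, where $\Pi$ is the cyclic permutation matrix, which lies in $SO(3)$. Distinct edges contribute distinct two-body Pauli operators, so local equivalence $K(t)\simeq K(0)$ would force $R_u\Pi^{t_{uv}}R_v^T=I$ for every edge, i.e.\ $\Pi^{t_{uv}}=R_u^TR_v$. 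Going around a cycle $u_0,\dots,u_\ell=u_0$, the right-hand sides telescope to $I$, so $\Pi^{\sum t}=I$, which forces $\sum t\equiv0\pmod 3$. The same telescoping applies if we allow edge-independent orthogonal frames. We must also rule out improper orthogonal frames and global sign issues. For $U(2)$ conjugation the image lies in $SO(3)$ automatically, and the identity $\Pi^{t}=R_u^TR_v$ is an exact matrix equation, so signs cannot cancel.

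I expect the main obstacle to be the matching step in the necessity direction: justifying that unitary equivalence of the full Hamiltonians forces the coefficient matrices to agree \emph{edge by edge}, rather than only their spectra agreeing. This follows because the local conjugation maps the operator $\sum_a P_{a,u}P_{a+t,v}$ to an operator supported on the same pair $\{u,v\}$, and the Pauli basis expansion of a Hermitian operator is unique. So the equality $U K(t) U^\dagger = K(0)$ should be read as operator equality under a product of local unitaries, and I would state that interpretation explicitly before comparing two-body Pauli coefficients on each edge.
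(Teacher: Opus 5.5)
Your proposal is correct and follows the paper's proof essentially step for step: sufficiency by spanning-tree propagation of frames $k_u$ and conjugation by $C^{-k_u}$, necessity by edge-by-edge $\mathrm{SO}(3)$ coefficient matching $\Pi(t_{uv})=R_u^{\mathsf T}R_v$ telescoped around a cycle, and a linear-time traversal that returns a fundamental cycle with nonzero twist sum. Your explicit appeal to the uniqueness of the Pauli expansion usefully justifies the coefficient-matching step, which the paper asserts without comment, and your sign hedge on $C^{-k_u}$ is unnecessary because $C^mP_aC^{-m}=P_{a+m}$ holds exactly.
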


\begin{proof}
For $s\in\mathbb Z_3$, define the cyclic coupling matrix
\[
\Pi(s)_{ab}=
\begin{cases}
1,&b=a+s\pmod 3,\\
0,&\text{otherwise}.
\end{cases}
\]
Writing $\boldsymbol P_u=(P_{0,u},P_{1,u},P_{2,u})^{\mathsf T}$, the
interaction on $(u,v)$ is
$\boldsymbol P_u^{\mathsf T}\Pi(t_{uv})\boldsymbol P_v$, and
$\Pi(s)\Pi(r)=\Pi(s+r)$.

A single-qubit unitary acts on the Pauli vector through a rotation
$R_u\in\mathrm{SO}(3)$. If local unitaries map $K(t)$ to $K(0)$,
coefficient matching on each edge gives
\[
\Pi(t_{uv})=R_uR_v^{\mathsf T}.
\]
Multiplying these identities around a cycle cancels adjacent rotations and
gives
\[
\Pi\bigl(\sum_{(u,v)\,\text{on the cycle}}t_{uv}\bigr)=I.
\]
Since $\Pi(1)$ has order three, the oriented twist sum must vanish modulo
three. Thus, a cycle with nonzero twist sum rules out this equivalence under arbitrary local one-qubit unitaries.

Conversely, choose a root in each connected component and propagate labels
$k_u\in\mathbb Z_3$ along a spanning tree according to
$t_{uv}=k_v-k_u$. Vanishing cycle sums make these labels path-independent and
ensure the same relation on every non-tree edge. Because $C^{-k_u}$ sends
$P_a$ to $P_{a-k_u}$, conjugating vertex $u$ by $C^{-k_u}$ aligns every edge.
A graph traversal constructs the labels in linear time. If a non-tree edge
violates the relation, that edge and the corresponding tree paths produce a
cycle with nonzero twist sum.
\end{proof}

The obstruction holds even for signed aligned targets. After local
rotations, let $D_{uv}=R_u^{\mathsf T}\Pi(t_{uv})R_v$ be the normalized edge
matrix, with $D_{vu}=D_{uv}^{\mathsf T}$ in reverse. Alignment makes every
$D_{uv}$ diagonal and in $\mathrm{SO}(3)$, so it squares to $I$, as does any
product of such matrices. The product along a cycle is
$R_{u_0}^{\mathsf T}\Pi(\sum t_{uv})R_{u_0}$, which instead has order three
when the twist sum is nonzero. Thus no local one-qubit unitaries can align
a frustrated cycle, even with signed couplings.

The vanishing-cycle-sum condition is the standard balance condition for a
$\mathbb Z_3$ gain graph (a graph whose oriented edges carry
$\mathbb Z_3$ labels)~\cite{Zaslavsky1989}. We call a twist field
\emph{balanced} when every cycle sum vanishes and \emph{frustrated} otherwise.

\endgroup

\paragraph{Cycle examples.}
On an oriented triangle, twists $(1,1,1)$ are balanced, whereas $(1,1,0)$
are frustrated. A four-cycle with one twist-$1$ edge and three twist-$0$
edges is bipartite but frustrated. These are fixed-packing examples
in the image of Proposition~\ref{prop:positive-cone}, with no claim about
greedy compilation of the same sparse graphs.

\subsection*{Mixed-axis source terms under greedy compilation}
On the hard source family below, the local Pauli permutations in
Theorem~\ref{thm:connected} make every source Pauli product mixed-axis
without changing the spectrum.

\begin{corollary}[Hardness with mixed-axis source terms]
\label{cor:mixed-axis-source}
Fix positive rational $J_X,J_Y,J_Z$ that are not all equal. For every fixed
stable degree-ordered greedy compiler, the compiled QRAO energy promise
problem remains $\QMA$-complete on connected, regular, non-bipartite,
exactly three-to-one compressed positive-weight MaxCut instances, even when
every non-filler source Pauli product has different axes at its two endpoints.
The nonidentity filler contribution has norm at most one eighth of the source
promise gap, as in Theorem~\ref{thm:connected}.
\end{corollary}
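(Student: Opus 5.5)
The plan is to build a $\QMA$-hard positive aligned three-axis source family from Piddock and Montanaro. I then feed it through Theorem~\ref{thm:connected}, choosing the per-row cyclic permutations $\gamma_r\in C_3$ so that every source edge acquires a nonzero twist. The fixed interaction $J_XXX+J_YYY+J_ZZZ$ with positive, not all equal coefficients lies in the $\QMA$-complete region of \cite[Theorem~2(i)]{PiddockMontanaro2017}: all three pairwise sums are positive and at least two coefficients are nonzero. Theorem~\ref{thm:axis-classification} then gives $\QMA$-completeness for the aligned lift with $c_P=J_P$ on unrestricted graphs. Membership in $\QMA$ is immediate because the compiled Hamiltonian is $2$-local with polynomial weights.

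For the mixed-axis condition, I would choose local frames $k_r\in\mathbb Z_3$ on the source qubits so that $k_u\neq k_v$ on every source edge. This requires a proper $3$-coloring of the source interaction graph, which general hard families need not admit. So the first step is to check that Piddock--Montanaro's reduction, or a simple modification of it, yields $3$-colorable interaction graphs, and to restrict to such a subfamily.

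I expect the obstacle here to be that Piddock and Montanaro's hard instances are not obviously $3$-colorable. If that check fails, I would try a different mechanism. Theorem~\ref{thm:connected} places part $a$ at Pauli position $\gamma_r(a)$, so a same-part source edge $(v_{a,u},v_{a,v})$ becomes $P_{\gamma_u(a)}P_{\gamma_v(a)}$. This is mixed-axis exactly when $\gamma_u(a)\neq\gamma_v(a)$, that is, when $\gamma_u\neq\gamma_v$ as cyclic shifts. Cyclic shifts differing by a nonzero offset move every axis, so all three axis products on that edge are mixed simultaneously. That is why the condition is coloring-based. The fallback is to subdivide or otherwise modify edges while preserving hardness. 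For example, the hardness reduction can be redone with bipartite-free gadgets so that the graph has bounded chromatic number three. Piddock--Montanaro's constructions go through perturbative gadgets on degree-bounded graphs, and one can argue $3$-colorability via Brooks' theorem after excluding $K_4$ components. Alternatively, I would accept a greedy-style $3$-coloring when the maximum degree is at most $3$.

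Given the coloring, set $\gamma_r$ to the cyclic shift by $k_r$. The spectrum is unchanged because $U_\gamma$ is a product of sign-free Cliffords $C^{k_r}$. Equation~\eqref{eq:controlled-completion} then gives the compiled Hamiltonian as a unitary conjugate of $H_{\mathrm{src}}$ plus $\alpha I+V_0$ with $\lVert V_0\rVert\le\Delta/8$. The promise-gap bookkeeping of Appendix~\ref{app:compiler-completion} leaves a gap of at least $3\Delta/4$. Theorem~\ref{thm:connected} already supplies connectivity, regularity, non-bipartiteness and exact three-to-one packing. Note that the non-equality of $J_X,J_Y,J_Z$ matters: the conjugated term is then genuinely anisotropic mixed-axis rather than rewritable as isotropic aligned. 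Consistent with Proposition~\ref{prop:twists}, the twists $t_{uv}=k_v-k_u$ are balanced by construction, which explains why hardness is unaffected.
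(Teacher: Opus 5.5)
Your mechanism is the paper's. You assign cyclic frames $C^{k_r}$ from a proper $3$-coloring of the source interaction graph, so that every source term $J_aP_aP_a$ becomes $J_aP_{a+k_u}P_{a+k_v}$ with $k_u\neq k_v$. You then let Theorem~\ref{thm:connected} supply the positive completion and the $\Delta/8$ bookkeeping.

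The genuine gap is the step you flag yourself: you never exhibit a $\QMA$-hard positive aligned family whose interaction graphs are $3$-colorable. The unrestricted-graph family behind Theorem~\ref{thm:axis-classification} gives no control of chromatic number, and your fallbacks do not close this.
\begin{itemize}
\item Brooks' theorem gives $\chi\le 3$ only when the maximum degree is at most $3$, which the gadget constructions do not guarantee. A constant degree bound larger than $3$ gives nothing.
\item Greedy coloring at maximum degree $3$ can need four colors.
\item Subdividing edges changes the Hamiltonian and requires a fresh hardness proof. Subdividing every edge once even makes the graph bipartite, where Proposition~\ref{prop:bipartite-stoqma} puts the problem in $\BQP\cap\StoqMA$. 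Naive subdivision therefore destroys $\QMA$-hardness unless $\StoqMA=\BQP=\QMA$.
\item ``Bipartite-free gadgets with chromatic number three'' is not an available result.
\end{itemize}

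The missing ingredient is \cite[Theorem~4]{PiddockMontanaro2017}. It states that the anisotropic interaction $J_XXX+J_YYY+J_ZZZ$ with positive weights remains $\QMA$-complete on subgraphs of the triangular lattice. That lattice has degree $6$ but a structural three-sublattice proper coloring. Assigning frames $I$, $C$, $C^2$ to the sublattices puts the endpoints of every source edge in different frames, which is exactly what the paper does.

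This also explains the hypothesis that $J_X,J_Y,J_Z$ are not all equal: the triangular-lattice theorem excludes the antiferromagnetic Heisenberg point. Your stated reason for the hypothesis is incorrect. The cyclic-frame trick makes the isotropic interaction mixed-axis just as well; for $k_v-k_u=1$ it yields $X_uY_v+Y_uZ_v+Z_uX_v$. The condition is needed for the existence of a $3$-colorable hard source family, not for the mixed-axis property.
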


\begin{proof}
Piddock and Montanaro prove $\QMA$-completeness for the anisotropic interaction
$J_XX\otimes X+J_YY\otimes Y+J_ZZ\otimes Z$ with positive weights on
subgraphs of the triangular lattice~\cite[Theorem~4]{PiddockMontanaro2017}.
Give that lattice its standard three-sublattice coloring and assign the
cyclic Pauli frames $I$, $C$, and $C^2$ to the three color classes,
where $C$ is defined in Eq.~\eqref{eq:cyclic-clifford}. Every source edge
joins different frames. Each same-axis Pauli product therefore becomes
mixed-axis with its coefficient unchanged. Theorem~\ref{thm:connected}
supplies the positive completion and preserves the promise gap. Membership
in $\QMA$ follows from the two-local Hamiltonian formulation.
\end{proof}

The source twists sum to zero around every cycle, since their vertex-frame
offsets telescope. The mixed-axis condition concerns non-filler source terms.
Filler couplings may have aligned axes. The cited hardness theorem covers
anisotropic interactions only.

\section{Proofs for state-domain and approximation results} \label{appB:state-domain-proofs}

\subsection{Proof of Theorem~\ref{thm:states}}
\label{app:state-domain-proof}

\begin{proof}
We consider the QRAO optimal value promise problems for compressed Hamiltonians $\mathcal C_3$ over each of the four quantum state domains individually.
For QRAC codewords, Eq.~\eqref{eq:qrao-codeword-edge} equates the
energy of each encoded assignment with its witnessed classical MaxCut value. The codeword problem is therefore $\NP$-complete by the
classical part of Corollary~\ref{cor:main}. 

For pure product states, consider the unit weight
three-axis family whose product-state energy problem is $\NP$-complete
~\cite[Corollary~6]{KallaugherEtAl2025}. Applying
Theorem~\ref{thm:connected} with $d=3$ outputs the compressed Hamiltonian 
$H_{\mathrm{out}}
=
U_\gamma H_{\mathrm{src}}U_\gamma^\dagger+\alpha I+V_0$, 
$\lVert V_0\rVert\leq\Delta/8$,
where $\Delta$ is the source problem promise gap. Local unitaries preserve the
set of product states. Lemma~\ref{lem:stability} therefore
shows that, after accounting for the known shift $\alpha$, each
threshold changes by at most~$\Delta/8$. The remaining gap is at least
$3\Delta/4$, proving $\NP$-hardness on $\mathcal C_3$. Membership in $\NP$
follows because polynomial-precision single-qubit Bloch vectors provide
a witness whose energy can be efficiently evaluated to the required accuracy.

Next, for any Hamiltonian, a linear objective such as the energy $\operatorname{Tr}(H\rho)$ has the same maximum value over pure
product states as over their convex hull, whose elements are precisely the separable states.
Hence 
$\OPT_{\mathrm{sep}}(H)=\OPT_{\mathrm{prod}}(H)$, 
and so the above arguments for 
product states also give
$\NP$-completeness over the separable domain.

Finally, for unrestricted states, the claim follows as the $d=3$ case of
Corollary~\ref{cor:main}, which combines the
$\QMA$-complete three-axis family of
Theorem~\ref{thm:axis-classification} with the compiler realization of
Theorem~\ref{thm:connected}.

\end{proof}

\subsection{Proof of Theorem~\ref{thm:approx}}
\label{app:approx-proof}

\begin{proof}
For the first statement, Eq.~\eqref{eq:qrao-qmc} multiplies the Quantum MaxCut
value by three and therefore preserves every multiplicative ratio. Piddock's convention~\cite{Piddock2025QMC} uses $1/4$ per edge, so our Quantum MaxCut Hamiltonian and the QRAO lift are, respectively, twice and six times the Hamiltonian in that convention. These positive rescalings leave the ratio unchanged. In the
three disjoint copies, corresponding vertices have equal degree and identical
colored neighbor histories. 
 Under the vertex-major order, in which all three copies of each
original vertex are listed consecutively, (G1) assigns corresponding
vertices the same color, (G2) lists them consecutively, and (G3) packs
them onto the three Pauli axes of one qubit.
For the connected statement, apply Theorem~\ref{thm:connected} to the same
hard family, choosing the total filler contribution to have norm at most
$\delta$. If source YES and NO values are separated by thresholds $B>0$ and
$\eta B$, the completed YES value is at least $B-\delta$, while the completed
NO value is at most $\eta B+\delta$. Choose
$\delta/B<(1-\eta)/4$. Their ratio is then below
\begin{equation}
 \eta'=\frac{1+3\eta}{3+\eta}<1.
\end{equation}
All filler weights remain positive and have polynomial bit complexity, so the
resulting compressed Hamiltonian remains in $\mathcal C_3$ as claimed.
\end{proof}

\end{document}